\newcommand{\set}[2]{\left\{#1\mathrel{\left|\vphantom{#1}\vphantom{#2}\right.}#2\right\}}
\newcommand{\os}[1]{\left\{\mathinner{#1}\right\}}
\newcommand{\abs}[1]{\left|\mathinner{#1}\right|}
\newcommand{\N}{\ensuremath{\mathbb{N}}}
\newcommand{\Oh}{\mathcal{O}}
\newcommand{\ms}{\hspace*{1pt}}
\newcommand{\trans}[1]{\mathbin{%
  \tikz[baseline=-0.5ex]\draw[->,>=stealth'] (0,0) -- %
  node[above,inner sep=0pt,outer sep=2pt]{\text{\small $#1$\,}} %
  (5mm, 0); %
  }}
\newcommand{\X}{\ensuremath\mathsf{X}}
\newcommand{\Y}{\ensuremath\mathsf{Y}}
\newcommand{\Xra}{$\X$\nobreakdash-ranker\xspace}
\newcommand{\Yra}{$\Y$\nobreakdash-ranker\xspace}
\newcommand{\Xras}{$\X$\nobreakdash-rankers\xspace}
\newcommand{\Yras}{$\Y$\nobreakdash-rankers\xspace}
\newcommand{\paletter}[4]{\hspace*{1mm}\overset{#2\,#3}{\underset{\color{gray!60} {\scriptscriptstyle #4}}{\vphantom{f}\vphantom{b}#1}\hspace*{1mm}}}
\newcommand{\aletter}[3]{\hspace*{1mm}\overset{#2\,#3}{\vphantom{b}#1}\hspace*{1mm}}
\newcommand{\maletter}[3]{\hspace*{1mm}\overset{\color{gray!60} #2\,#3}{\vphantom{b}#1}\hspace*{1mm}}
\newcommand{\ie}{\textit{i.e.}\xspace}
\newtheorem{remark}{Remark}
\newtheorem{example}{Example}
\newtheorem{theorem}{Theorem}
\newtheorem{lemma}[theorem]{Lemma}
\newtheorem{proposition}[theorem]{Proposition}
\title{Testing Simon's congruence}
\author{Lukas Fleischer\footnote{Supported by the German Research Foundation (DFG) under grant DI 435/5--2.}\\
  FMI, University of Stuttgart\\
  Universitätsstraße 38, 70569 Stuttgart, Germany\\
  \texttt{fleischer@fmi.uni-stuttgart.de}\\[1em]
  Manfred Kuf\-leitner\\
  Department of Computer Science, Loughborough University\\
  Epinal Way, Loughborough LE11 3TU, United Kingdom\\
  \texttt{m.kufleitner@lboro.ac.uk}}
\date{}
\begin{document}

\maketitle

\begin{abstract}
  Piecewise testable languages are a subclass of the regular languages. There are many equivalent ways of defining them; Simon's congruence $\sim_k$ is one of the most classical approaches. Two words are $\sim_k$-equivalent if they have the same set of (scattered) subwords of length at most $k$. A language $L$ is piecewise testable if there exists some $k$ such that $L$ is a union of $\sim_k$-classes. 

  For each equivalence class of $\sim_k$, one can define a canonical
  representative in shortlex normal form, that is, the minimal word with respect to the lexicographic order among the shortest words in $\sim_k$. We present an algorithm for computing the canonical representative of the $\sim_k$-class of a given word $w \in A^*$ of length $n$. The running time of our algorithm is in $\Oh(\abs{A} n)$ even if $k \leq n$ is part of the input. This is surprising since the number of possible subwords grows exponentially in $k$. The case $k>n$ is not interesting since then, the equivalence class of $w$ is a singleton. If the alphabet is fixed, the running time of our algorithm is linear in the size of the input word. Moreover, for fixed alphabet, we show that the computation of shortlex normal forms for $\sim_k$ is possible in deterministic logarithmic space.
  
  One of the consequences of our algorithm is that one can check with the same complexity whether two words are $\sim_k$-equivalent (with $k$ being part of the input).
\end{abstract}

\section{Introduction}

We write $u \prec v$ if the word $u$ is a (scattered) subword of $v$, that is, if there exist factorizations $u = u_1 \cdots u_n$ and $v = v_0 u_1 v_1 \cdots u_n v_n$. In the literature, subwords are sometimes called \emph{piecewise} subwords to distinguish them from factors. Higman showed that, over finite alphabets, the relation $\prec$ is a well-quasi-ordering~\cite{hig52}. This means that every language contains only finitely many minimal words with respect to the subword ordering. This led to the consideration of piecewise testable languages. A language $L$ is \emph{piecewise testable} if there exists a finite set of words $T$ such that $v \in L$ only depends on $\set{u \in T}{u \prec v}$; in other words, the occurrence and non-occurrence of subwords in $T$ determines membership in $L$. Equivalently, a language $L$ is piecewise testable if it is a finite Boolean combination of languages of the form $A^* a_1 A^* \cdots a_n A^*$ with $a_i \in A$ (using the above notation, the sequences $a_1 \cdots a_n$ in this combination give the words in $T$). The piecewise testable languages are a subclass of the regular languages and they play a prominent role in many different areas. For instance, they correspond to the languages definable in alternation-free first-order logic~\cite{tho82:short} which plays an important role in database queries. They also occur in learning
theory~\cite{Kontorovich08,RuizG96} and computational
linguistics~\cite{FuHT11,RogersHBEVWW10}.

In the early 1970s, Simon proved his famous theorem on piecewise testable languages: A language is piecewise testable if and only if its syntactic monoid is finite and $\mathcal{J}$-trivial~\cite{sim75:short}. An immediate consequence of Simon's Theorem is that it is decidable whether or not a given regular language $L$ is piecewise testable. Already in his PhD thesis~\cite{Simon72}, Simon considered the complexity of this problem when $L$ is given as a deterministic finite automaton~(DFA). His  algorithm can be implemented to have a running time of $\Oh(2^{\abs{A}} n^2)$ for an $n$-state DFA over the alphabet~$A$. This result was successively improved over the years~\cite{chohuynh91,KlimaPolak13,sternTCS85,Trahtman01} with the latest algorithm having a running time of $\Oh(\abs{A}^2 n)$; see~\cite{KlimaEtAl}. If the input is a DFA, then the problem is $\mathrm{NL}$-complete~\cite{chohuynh91}; and
if the input is a nondeterministic finite automaton, the problem is $\mathrm{PSPACE}$-complete~\cite{HolubMT14}. Restricting the length of the relevant subwords $T$ to some constant $k$ leads to the notion of $k$-piecewise testable languages. At first sight, it is surprising that, for every fixed $k \geq 4$, deciding whether a given DFA accepts a $k$-piecewise testable language is $\mathrm{coNP}$-complete~\cite{KlimaEtAl}; see also~\cite{MasopustT15}.

One of the main tools in the original proof of Simon's Theorem is the congruence $\sim_k$ for $k \in \N$. By definition, two words $u$ and $v$ satisfy $u \sim_k v$ if $u$ and $v$ have the same subwords of length at most $k$. Naturally, the relation~$\sim_k$ is nowadays known as \emph{Simon's congruence}. It is easy to see that a language $L$ is piecewise testable if and only if there exists $k$ such that~$L$ is a union of $\sim_k$-classes. Understanding the combinatorial properties of $\sim_k$ is one of the main tools in the study of piecewise testable languages. For example, in the proof of his theorem, Simon already used that $(uv)^k \sim_k (uv)^k u$ for all words $u,v$. Upper and lower bounds on the index of $\sim_k$ were given by K{\'a}tai-Urb{\'a}n et al.~\cite{KataiPPPS2012} and Karandikar et al.~\cite{KaranfikarKS2015}.

There are two natural approaches for testing whether or not $u \sim_k v$ holds. The first approach constructs a DFA $\mathcal{A}_{k,u}$ for the language $\set{w \prec u}{k \geq \abs{w}}$ of the subwords of $u$ of length at most $k$ and a similar DFA $\mathcal{A}_{k,v}$ for $v$. Then $u \sim_k v$ if and only if $\mathcal{A}_{k,u}$ and $\mathcal{A}_{k,v}$ accept the same language. This can be tested with Hopcroft's algorithm in time almost linear in the size of the automata~\cite{hoka71}. Here, \emph{almost linear} in $n$ means $\Oh(n \cdot a(n))$ where~$a(n)$ is the inverse Ackermann function. It is possible to construct the automata such that $\mathcal{A}_{k,u}$ has at most $k \abs{u} + 2$ states, see the remark at the end of Section~\ref{sec:prelim} below. Hence, the resulting test is almost linear in $\abs{A} \ms k \ms \abs{uv}$ if the alphabet is $A$.

The second approach to testing $u \sim_k v$ is the computation of normal forms. A normal form is a unique representative of a $\sim_k$-class. In particular, we have $u \sim_k v$ if and only if $u$ and $v$ have the same normal form. By computing the normal forms for both words and then checking whether they are identical, the complexity of this test of $u \sim_k v$ is the same as the computation of the normal forms. We should mention that the computation of normal forms is also interesting in its own right since it can provide some insight into the combinatorial properties of $\sim_k$.
Normal forms for $k=2$ and $k=3$ were considered by K{\'a}tai-Urb{\'a}n et al.~\cite{KataiPPPS2012} and normal forms for $k=4$ were given by Pach~\cite{Pach2015}. An algorithm for computing normal forms for arbitrary $k$ was found
only recently by Pach~\cite{Pach2017}. Its running time is $\Oh(\abs{A}^k  (n + \abs{A}))$ for inputs of length $n$ over the alphabet $A$, that is, polynomial for fixed $k$ and exponential otherwise.

We significantly improve this result by providing an algorithm with a running
time in $\Oh(\abs{A} n)$ even if $k$ is part of the input.
For a fixed alphabet, the running time is linear which is optimal.
Moreover, the algorithm can easily be adapted to run in deterministic
logarithmic space, thereby addressing an open problem from~\cite{KataiPPPS2012}. As a consequence we can check with the same running time (or the same complexity) whether two given words are~$\sim_k$-equivalent even if $k$ is part of the input, thereby considerably improving on the above automaton approach.

Our algorithm actually does not compute just some normal form but the \emph{shortlex normal form} of the input word $u$, \ie,~the shortest, and among all shortest the lexicographically smallest, word $v$ such that $u \sim_k v$.
Our main tools are so-called \emph{rankers}~\cite{stv01dlt:short,wi09lmcs}.
For each position $i$ in the input word, the algorithm computes the lengths of
the shortest \Xras and \Yras reaching $i$. One can then derive the
shortlex normal form by deleting and sorting certain letters based on these
attributes. A more detailed outline of the paper is given in Section~\ref{sec:outline}.

\section{Preliminaries}\label{sec:prelim}

Let $A$ be a finite alphabet. The elements in $A$ are called \emph{letters} and
a sequence of letters $u = a_1 \cdots a_\ell$ is a \emph{word}. The number
$\ell$ is the \emph{length} of the word. It is denoted by $\abs{u}$.
The set of all words over the alphabet $A$ is $A^*$.
Throughout this paper,  $a$, $b$ and $c$ are used to denote
letters.
For a word $a_1 \cdots a_\ell$, the numbers $\os{1, \dots, \ell}$ are called
\emph{positions} of the word, and $i$ is a \emph{$c$-position} if
$a_i = c$. The letter $a_i$ is the \emph{label} of position $i$. Two positions $i$ and $j$ with $i<j$ are \emph{consecutive} $c$-positions if $a_i = a_j = c$ and $a_\ell \neq c$ for all $\ell \in \os{i+1,\ldots,j-1}$.

A word $a_1 \cdots a_\ell$ is a \emph{subword}
of a word $v \in A^*$ if $v$ can be written as $v = v_0 a_1 \cdots v_{\ell-1}
a_\ell v_\ell$ for words $v_i \in A^*$. We write $u \prec v$ if $u$ is a subword
of $v$.
A \emph{congruence} on $A^*$ is an equivalence relation $\sim$ such that $u
\sim v$ implies $puq \sim pvq$ for all $u, v, p, q \in A^*$.
For a fixed number $k \in \N$, \emph{Simon's congruence} $\sim_k$ on $A^*$ is defined by $u
\sim_k v$ if and only if $u$ and $v$ contain the same subwords of length at
most $k$.

We assume that the letters of the alphabet are totally ordered.
A word $u$ is \emph{lexicographically smaller than $v$} if, for some common $p
\in A^*$, there exists a prefix $pa$ of $u$ and a prefix~$pb$ of~$v$ such that
$a < b$. (We apply the lexicographic order only for words of the same length;
in particular, we do not care about the case when $u$ is a proper prefix of
$v$.)
Given a congruence~$\sim$ on $A^*$, we define the \emph{shortlex normal form}
of a word $u$ to be the shortest word $v$ such that $u \sim v$ and such that no
other word $w \in A^*$ with $w \sim v$ and $\abs{w} = \abs{v}$ is
lexicographically smaller than $v$. In other words, we first pick the shortest
words in the~$\sim$-class of $u$ and among those, we choose the
lexicographically smallest one.

Our main tools are so-called \emph{rankers}~\cite{stv01dlt:short,wi09lmcs}.
An \emph{\Xra} is a nonempty word over the alphabet $\set{\X_a}{a \in A}$ and
a \emph{\Yra} is a nonempty word over $\set{\Y_a}{a \in A}$. The length of a
ranker is its length as a word.
The modality $\X_a$ means ne$\X$t-$a$ and is interpreted as an instruction of
the form ``go to the next $a$-position''; similarly, $\Y_a$ is a shorthand for~$\Y$esterday-$a$ and means ``go to the previous $a$-position''.
More formally, we let $\X_a(u) = i$ if~$i$ is the smallest
$a$\nobreakdash-position of $u$, and we let $r \X_a(u) = i$ for a ranker $r$ if $i$ is the
smallest $a$\nobreakdash-position greater than $r(u)$.
Symmetrically, we let $\Y_a(u) = i$ if $i$ is the greatest $a$\nobreakdash-position of $u$ and
we let $r \Y_a(u) = i$ if $i$ is the greatest $a$\nobreakdash-position smaller
than $r(u)$. In particular, rankers are processed from left to right.
Note that the position $r(u)$ for a ranker $r$ and a word $u$ can be undefined.
A word $b_1 \cdots b_\ell$ \emph{defines} an \Xra $\X_{b_1} \cdots \X_{b_\ell}$
and a \Yra $\Y_{b_\ell} \cdots \Y_{b_1}$. We have $u \prec v$ if and only if
$r(v)$ is defined for the \Xra (resp.\ \Yra) $r$ defined by $u$. 
Similarly, if $r$ is the \Xra defined by $u$ and $s$ is the \Yra defined by
$v$, then $uv \prec w$ if and only if $r(w) < s(w)$. The correspondence between
\Xras and subwords leads to the following automaton construction.

\begin{remark}\label{rem:smallautomata}
Let $u$ be a word of length $n$. We construct a DFA $\mathcal{A}_{k,u}$ for the language $\set{w \prec u}{\abs{w} \leq k}$. The set of states is $\os{(0,0)} \cup \os{1,\ldots,k} \times \os{1,\ldots,n}$ plus some sink state which collects all missing transitions. The initial state is $(0,0)$ and all states except for the sink state are final. We have a transition $(\ell,i) \trans{a} (\ell+1,j)$ if $\ell < k$ and $j$ is the smallest $a$-position greater than $i$. The idea is that the first component counts the number of instructions and the second component gives the current position.
\qed
\end{remark}

\section{Attributes and outline of the paper}\label{sec:outline}

To every position $i \in \os{1,\ldots,n}$ of a word $a_1 \cdots a_n \in A^n$, we assign an \emph{attribute} $(x_i,y_i)$ where $x_i$ is the length of a shortest \Xra reaching $i$ and $y_i$ is the length of a shortest \Yra reaching $i$. We call $x_i$ the \emph{$x$-coordinate} and $y_i$ the \emph{$y$-coordinate} of position $i$.

\begin{example}\label{ex:second}
We will use the word $u =bacbaabada$ as a running example throughout this paper. 
The attributes of the positions in $u$ are as follows:
\begin{equation*}
  \aletter{b}{1}{2}
  \aletter{a}{1}{2}
  \aletter{c}{1}{1}
  \aletter{b}{2}{2}
  \aletter{a}{2}{3}
  \aletter{a}{3}{2}
  \aletter{b}{3}{1}
  \aletter{a}{4}{2}
  \aletter{d}{1}{1}
  \aletter{a}{2}{1}
\end{equation*}
The letter $a$ at position $5$ can be reached by the \Yra $\Y_b \Y_a \Y_a$ and
the $a$ at position~$6$ can be reached by the \Xra $\X_c \X_a \X_a$.
Both rankers visit both positions $5$ and $6$. No \Xra visiting position $6$
can avoid position $5$ and no \Yra visiting position $5$ can avoid position
$6$. Deleting either position $5$ or position $6$ reduces the attributes of the
other position to $(2, 2)$.
\qed
\end{example}

We propose a two-phase algorithm for computing the shortlex normal of a word $u$ within its $\sim_k$-class. The first phase is to reduce the word by deleting letters resulting in a word of minimal length within the $\sim_k$-class of $u$. The second phase sorts blocks of letters to get the minimal word with respect to the lexicographic ordering.
Both phases depend on the attributes. The computation of the attributes and the first phase are combined as follows.

\begin{description}
\item[Phase 1a:] Compute all $x$-coordinates from left to right.
\item[Phase 1b:] Compute all $y$-coordinates from right to left while dynamically deleting a position whenever the sum of its coordinates would be bigger than $k+1$.
\item[Phase 2:] Commute consecutive letters $b$ and $a$ (with $b>a$) whenever they have the same attributes and the sum of the $x$- and the $y$-coordinate equals $k+1$.
\end{description}
As we will show, a crucial property of Phase~1b is that the dynamic process does not mess up the $x$-coordinates of the remaining positions that were previously computed in Phase~1a.

The \textsf{\textbf{outline}} of the paper is as follows. In Section~\ref{sec:lengthred}, we prove that successively deleting all letters where the sum of the attributes is bigger than $k+1$ eventually yields a length-minimal word within the $\sim_k$-class of the input. This statement has two parts. The easier part is to show that we can delete such a position without changing the $\sim_k$-class. The more difficult part is to show that if no such deletions are possible, the word is length-minimal within its $\sim_k$-class. In particular, no other types of deletions are required. Also
note that deleting letters can change the attributes of the remaining letters.

Section~\ref{sec:comm} has two components. First, we show that commuting consecutive letters does not change the $\sim_k$-class if (a) the two letters have the same attribute and (b) the sum of the $x$- and the $y$-coordinate equals $k+1$. Moreover, such a commutation does not change any attributes. Then, we prove that no other types of commutation are possible within the $\sim_k$-class. This is quite technical to formalize since, a priori, we could temporarily leave the $\sim_k$-class only to re-enter it again with an even smaller word.

Finally, in Section~\ref{sec:alg}, we present an easy and efficient algorithm for computing shortlex normal forms for $\sim_k$. First, we show how to efficiently compute the attributes. Then we combine this computation with a single-pass deletion procedure; in particular, we do not have to successively re-compute the attributes after every single deletion. Finally, an easy observation shows that we only have to sort disjoint factors where the length of each factor is bounded by the size of the alphabet. Altogether, this yields an $\Oh(\abs{A} n)$ algorithm for computing the shortlex normal form of an input word of length $n$ over the alphabet $A$. Surprisingly, this bound also holds if $k$ is part of the input.

\section{Length reduction}\label{sec:lengthred}

In order to reduce words to shortlex normal form, we want to identify positions
in the word which can be deleted without changing its $\sim_k$-class.
The following proposition gives a sufficient condition for such deletions.

\begin{proposition}\label{prp:delLetter}
Consider a word $uav$ with $a \in A$ and $\abs{ua} = i$. If the attribute $(x_i,y_i)$ at position $i$ satisfies $x_i + y_i > k+1$, then $uav \sim_k uv$.
\end{proposition}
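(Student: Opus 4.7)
The plan is as follows. One direction is immediate: since $uv \prec uav$, every subword of $uv$ is a subword of $uav$. The nontrivial direction is to show that every subword $w$ of $uav$ of length $\ell \leq k$ is also a subword of $uv$. I would argue by contradiction: assume some $w = b_1 \cdots b_\ell$ is a subword of $uav$ but not of $uv$, and derive $x_i + y_i \leq k+1$, contradicting the hypothesis.

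Under this assumption, every embedding of $w$ into $uav$ must use position $i$. I would work with two distinguished embeddings: the leftmost one $p_1 < \cdots < p_\ell$, where each position is chosen as small as possible (this is precisely the trajectory of the \Xra $\X_{b_1} \cdots \X_{b_\ell}$), and the rightmost one $q_1 < \cdots < q_\ell$ (trajectory of the \Yra $\Y_{b_\ell} \cdots \Y_{b_1}$). A routine induction shows $p_r \leq q_r$ for every $r$. Let $m$ be the index with $p_m = i$ and $m'$ the index with $q_{m'} = i$.

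The heart of the argument, and the step I expect to be the main obstacle, is establishing $m = m'$. The inequality $m \geq m'$ is easy: $p_{m'} \leq q_{m'} = i = p_m$ together with strict monotonicity of the $p_r$'s forces $m' \leq m$. Suppose for contradiction $m > m'$. Then $q_m > q_{m'} = i$, and I claim the hybrid sequence $p_1, \ldots, p_{m-1}, q_m, q_{m+1}, \ldots, q_\ell$ is a valid embedding of $w$ into $uav$ that avoids position $i$: strict monotonicity uses $p_{m-1} < p_m = i \leq q_m$ combined with $q_m < q_{m+1} < \cdots < q_\ell$, and each chosen position carries the correct label by construction. This contradicts the assumption that every embedding of $w$ uses position $i$, so $m = m'$.

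Finally, the leftmost embedding gives an \Xra of length $m$ reaching position $i$, so $x_i \leq m$; symmetrically, the rightmost embedding gives a \Yra of length $\ell - m + 1$ reaching position $i$, so $y_i \leq \ell - m + 1$. Adding these inequalities yields $x_i + y_i \leq \ell + 1 \leq k + 1$, contradicting $x_i + y_i > k + 1$. Hence no such $w$ exists, and $uav \sim_k uv$.
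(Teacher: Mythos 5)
Your proof is correct, and it reaches the same final contradiction as the paper's\,---\,namely that a subword $w$ of $uav$ witnessing $w \not\prec uv$ forces $x_i + y_i \leq \abs{w} + 1 \leq k+1$\,---\,but the mechanics are genuinely different. The paper takes an \emph{arbitrary} embedding of $w$ through position $i$, writes $w = paq$ with $p \prec u$ and $q \prec v$, notes that $pa \not\prec u$ and $aq \not\prec v$, and then splits into cases: if $\abs{p} \geq x_i - 1$ and $\abs{q} \geq y_i - 1$, then $\abs{w} \geq x_i + y_i - 1 > k$; otherwise, say $\abs{p} < x_i - 1$, the word $pa$ defines an \Xra of length less than $x_i$ that must land exactly on position $i$ (it cannot stop inside $u$ because $pa \not\prec u$), contradicting the definition of $x_i$. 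This sidesteps entirely the step you identify as the main obstacle: the paper never needs the leftmost and rightmost embeddings to pivot at the same index. Your route does need $m = m'$, and your hybrid-embedding argument for it is correct (the pointwise bound $p_r \leq q_r$ is indeed routine, and the spliced sequence $p_1, \ldots, p_{m-1}, q_m, \ldots, q_\ell$ cleanly avoids $i$). What your version buys is a slightly sharper intermediate statement\,---\,a single index $m$ simultaneously witnesses $x_i \leq m$ and $y_i \leq \abs{w} - m + 1$, so both ranker bounds are extracted from one pair of extremal embeddings\,---\,and the hybrid-embedding trick is a reusable device; what it costs is this extra lemma, which the paper's either/or case split renders unnecessary. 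Both arguments are elementary and of comparable difficulty.
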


\begin{proof}
Let $w \prec uav$ with $\abs{w} \leq k$. Assume that $w \not\prec uv$. Let $w = paq$ such that $p \prec u$ and $q \prec v$. Note that $pa \not\prec u$ and $aq \not\prec v$. If $\abs{p} \geq x_i -1$ and $\abs{q} \geq y_i - 1$, then 
\begin{equation*}
  k \geq \abs{w} = \abs{p} + 1 + \abs{q} \geq (x_i - 1) + 1 + (y_i - 1) = x_i + y_i - 1 > k,
\end{equation*}
a contradiction. Therefore, we have either $\abs{p} < x_i - 1$ or $\abs{q} < y_i - 1$. By left-right symmetry, it suffices to consider the case $\abs{p} < x_i - 1$. The word $pa$ defines an \Xra of length less than $x_i$ which reaches position $i$. This is not possible by definition of $x_i$. Hence, $w \prec uv$. Conversely, if $w \prec uv$ for a word $w$, then obviously we have $w \prec uav$. This shows  $uav \sim_k uv$.
\end{proof}

\begin{example}\label{ex:double-del}
Let $u = bacbaabada$ as in Example~\ref{ex:second} and let $k = 3$.
Note that the attributes $(x_i, y_i)$ at positions $i \in \os{5, 6}$ satisfy
the condition $x_i + y_i > k + 1$. By Proposition~\ref{prp:delLetter}, deleting
any of these positions yields a $\sim_k$-equivalent word. However, deleting
both positions yields the word $bacbbada \not\sim_k u$ since $cab \prec u$ and
$cab \not\prec bacbbada$.
\qed
\end{example}

\noindent
Consider a position $i$ with label $c$ and attribute $(x_i,y_i)$ in a word~$u$. Let
\begin{equation*}
  R^u_i =  \set{r}{r \text{ is an \Xra with } r(u) = i \text{ and } \abs{r} = x_i}.
\end{equation*}
We have $R^u_i \neq \emptyset$ by definition of $x_i$. We define a \emph{canonical \Xra} $r^u_i \in R^u_i$ by minimizing the reached positions, and the minimization procedure goes from right to left: Let $S_{x_i} = R^u_i$ and, inductively, we define $S_j$ as a nonempty subset of $S_{j+1}$ as follows. Let $p_j$ be the minimal position in $u$ visited by the  prefixes $s$ of length $j$ of the rankers in $S_{j+1}$; then $S_j$ contains all rankers in $S_{j+1}$ such that their prefixes of length $j$ visit the position $p_j$. Since the minimal positions (and their labels) in this process are unique, we end up with $\abs{S_1} = 1$. Now, the ranker $r^u_i$ is given by $S_1 = \os{r^u_i}$. By abuse of notation, we will continue to use the symbol $r$ for arbitrary rankers while $r^u_i$ denotes canonical rankers. The following example shows that minimizing from right to left (and not the other way round) is crucial.

\begin{example}
Let $u = abcab cdaef ccabc$.
The attributes of the letters are as follows:
\begin{equation*}
  \paletter{a}{1}{3}{1}
  \paletter{b}{1}{3}{2}
  \paletter{c}{1}{3}{3}
  \paletter{a}{2}{2}{4}
  \paletter{b}{2}{2}{5}
  \paletter{c}{2}{2}{6}
  \paletter{d}{1}{1}{7}
  \paletter{a}{2}{2}{8}
  \paletter{e}{1}{1}{9}
  \paletter{f}{1}{1}{10}
  \paletter{c}{2}{3}{11}
  \paletter{c}{3}{2}{12}
  \paletter{a}{2}{1}{13}
  \paletter{b}{2}{1}{14}
  \paletter{c}{3}{1}{15}
\end{equation*}
The last $c$ is at position $15$ and its attribute is $(3,1)$. It is easy to
verify that $\X_e \X_a \X_c$ is an \Xra of length $3$ visiting position $15$
and that there is no \Xra of length $2$ reaching this position. The unique \Yra
of length $1$ reaching position $15$ is $\Y_c$.
We have
\begin{equation*}
  R^u_{15} = \os{\X_d \X_b \X_c, \X_e \X_a \X_c, \X_e \X_b \X_c, \X_f  \X_a \X_c, \X_f \X_b \X_c}.
\end{equation*}
Using the above notation, it is easy to see that $S_3 = R^u_{15}$, $S_2 = \os{\X_e \X_a \X_c, \X_f  \X_a \X_c}$, and $S_1 = \os{\X_e \X_a \X_c}$. All prefixes of length $2$ of rankers in $S_2$ reach position $p_2 = 13$; the prefix of length $1$ of the ranker in $S_1$ reaches position $p_1 = 9$. The ranker visiting positions $9$, $13$ and $15$ (and no other positions) is $r^u_{15} = \X_e \X_a \X_c$, the unique ranker in $S_1$.

Also note that the minimal positions $m_j$ visited by prefixes of length $j$ of the rankers in~$R^u_{15}$ are $m_1 = 7$, $m_2 = 13$, and $m_3 = 15$; but there is no single ranker of length $3$ visiting positions $7$, $13$, and $15$.
\qed
\end{example}

While $r^u_i$ is defined in some right-to-left manner, it still has an important left-to-right property when positions of the same label are considered. 

\begin{lemma}\label{lem:consecutive}
Let $i<j$ be two consecutive $c$-positions in a word $u$ with attributes $(x_i,y_i)$ and $(x_j,y_j)$, respectively. If $x_j > x_i$, then $r^u_j = r^u_i \X_c$. 
\end{lemma}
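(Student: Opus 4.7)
The plan is to first pin down the relation $x_j = x_i + 1$, and then to show that the right-to-left minimization defining $r^u_j$ essentially runs the right-to-left minimization defining $r^u_i$ with an extra $\X_c$ appended.

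First I would observe that since $i$ and $j$ are consecutive $c$-positions, the X-ranker $r^u_i \X_c$ is well-defined and reaches position $j$. Because $\abs{r^u_i \X_c} = x_i + 1$, this gives $x_j \leq x_i + 1$, and combined with the hypothesis $x_j > x_i$ we obtain $x_j = x_i + 1$. In particular $r^u_i \X_c \in R^u_j$.

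Next I would examine the top step of the right-to-left procedure defining $r^u_j$. Write $S^j_{x_j}, S^j_{x_j-1}, \dots, S^j_1$ for the sets produced in that procedure, and similarly $S^i_{x_i}, \dots, S^i_1$ for the one defining $r^u_i$. Every ranker in $R^u_j$ ends with $\X_c$ because position $j$ has label $c$. I would then show that the minimum position $p_{x_i}$ reached by prefixes of length $x_i = x_j - 1$ of rankers in $S^j_{x_j}$ equals $i$. The upper bound $p_{x_i} \leq i$ comes from $r^u_i \X_c \in R^u_j$, whose length-$x_i$ prefix $r^u_i$ reaches $i$. For the lower bound, if $r' \X_c \in R^u_j$ and $r'(u) = p$, then $\X_c$ must take $p$ to $j$, so the smallest $c$-position strictly greater than $p$ is $j$; by consecutivity of $i$ and $j$ this forces $p \geq i$. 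Consequently $S^j_{x_i} = \set{r' \X_c}{r' \in R^u_i}$.

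Finally I would prove by downward induction on $t$ that for $1 \leq t \leq x_i$ the identity $S^j_t = \set{r' \X_c}{r' \in S^i_t}$ holds. The base case $t = x_i$ was just established. For the inductive step from $t+1$ to $t < x_i$, the length-$t$ prefix of a ranker $r' \X_c$ is precisely the length-$t$ prefix of $r'$, so the set of reached positions, and hence the minimum $p_t$, is the same in both minimization procedures; the surviving rankers therefore correspond exactly under $r' \mapsto r' \X_c$. Taking $t = 1$ and using $S^i_1 = \os{r^u_i}$ yields $S^j_1 = \os{r^u_i \X_c}$, which is the desired $r^u_j = r^u_i \X_c$.

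The main obstacle is the middle step: one must be careful that the minimization for $r^u_j$ only ranges over rankers whose final $\X_c$ lands at $j$ (not at $i$), so that no length-$x_i$ prefix can reach a position smaller than $i$. This is where the consecutivity assumption on $i$ and $j$ is used; the rest of the argument is a clean parallel induction.
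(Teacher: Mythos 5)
Your proposal is correct and follows essentially the same route as the paper's proof: establish $x_j = x_i+1$ and $R^u_i\,\X_c \subseteq R^u_j$, use consecutivity to show the penultimate position of any ranker in $R^u_j$ is at least $i$ (hence exactly $i$ after minimization), and then observe that the remaining minimization steps for $r^u_i$ and $r^u_j$ coincide. Your explicit downward induction on $t$ merely spells out what the paper compresses into the sentence that the remaining steps ``consider the same rankers and thus the same positions.''
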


\begin{proof}
Since no position $\ell$ with $i < \ell < j$ is labelled by $c$, we have $r^u_i \X_c(u) = j$. In particular, $x_j = x_i + 1$ and $R^u_i \X_c \subseteq R^u_j$. Let $r^u_j = r \X_c$. We have $r(u) \geq r^u_i(u)$, since otherwise $r \X_c(u) \leq i < j$, a contradiction. The minimization in the definition of $r^u_j$ now yields $r(u) = r^u_i(u)$. The remaining minimization steps in the definition of $r^u_i$ and $r^u_j$ consider the same rankers and thus the same positions. Hence, $r = r^u_i$.
\end{proof}

We now want to prove that the condition introduced in Proposition~\ref{prp:delLetter}
always results in a shortest word within the corresponding $\sim_k$-class.
To this end, we first need the following technical lemma and then prove the
main theorem of this section.

\begin{lemma}\label{lem:orderpreserve}
Let $u = a_1 \cdots a_n$ be a word and let $i<j$ be positions with $a_i = a_j$ and with $a_\ell \neq a_i$ for all $\ell \in \os{i+1,\ldots,j-1}$, \ie, $i$ and $j$ are consecutive $a_i$\nobreakdash-positions. Moreover, let the parameters $(x_i,y_i)$ and $(x_j,y_j)$ satisfy $x_i + y_i \leq k+1$ and $x_j \leq k$, respectively.
 For every word $v$ with $u \sim_k v$, we have $r^u_i(v) < r^u_j(v)$.
\end{lemma}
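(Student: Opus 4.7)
The plan is to establish the stronger chain
\[
  r^u_i(v) \;<\; (r^u_i \X_c)(v) \;\leq\; r^u_j(v),
\]
where $c=a_i$ is the common label at positions $i$ and $j$. The first inequality is automatic since $\X_c$ advances strictly to a later $c$-position; the task is the second.

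\emph{Rankers defined on $v$.} Because $j>i$ is another $c$-position, $i$ is not the last $c$-position of $u$ and hence $y_i\geq 2$; combined with $x_i+y_i\leq k+1$ this yields $x_i\leq k-1$. Consequently $r^u_i$, its extension $r^u_i\X_c$ (of length $x_i+1\leq k$, which reaches $j$ on $u$), and $r^u_j$ (of length $x_j\leq k$) all have length at most $k$. Each such ranker corresponds to the subword of $u$ spelled by the labels of its visited positions; these subwords have length at most $k$ and therefore, by $u\sim_k v$, also appear in $v$. Thus all three rankers are defined on $v$.

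\emph{The two cases.} If $x_j>x_i$, Lemma~\ref{lem:consecutive} directly yields $r^u_j=r^u_i\X_c$, turning the second inequality into an equality. In the remaining case $x_j\leq x_i$, the penultimate position of $r^u_j$ on $u$ cannot be $i$: otherwise the prefix of $r^u_j$ would provide a length-$(x_j-1)$ \Xra reaching $i$, contradicting $x_j\leq x_i$. Denote this penultimate position by $p$; then $i<p<j$, its label $b$ differs from $c$, and the right-to-left minimization built into the construction of canonical rankers forces the prefix of $r^u_j$ to be $r^u_p$ itself, so that $r^u_j=r^u_p\X_c$ and $x_p=x_j-1$. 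Monotonicity of ``next $c$-position after'' reduces $(r^u_i\X_c)(v)\leq r^u_j(v)=(r^u_p\X_c)(v)$ to $r^u_i(v)\leq r^u_p(v)$. I would derive this last inequality from the witness subword obtained by appending $b$ to the labels of $r^u_i$: it has length $x_i+1\leq k$, is present in $u$ (extend $r^u_i$ by $\X_b$, whose target is a $b$-position no later than $p$) and hence by $u\sim_k v$ in $v$, so that $r^u_i(v)$ is constrained to lie strictly before some $b$-position of $v$.

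\emph{Main obstacle.} The principal difficulty is exactly this final comparison $r^u_i(v)\leq r^u_p(v)$ in the case $x_j\leq x_i$. The witness subword above places $r^u_i(v)$ before some $b$-position of $v$, but $r^u_p(v)$ may be a later $b$-position, so one still has to align the two. I plan to close this gap by induction on $x_j$, noting that $x_p=x_j-1$ provides a strictly smaller induction parameter, and by leveraging the canonical structure of $r^u_p$ together with the witness subword to transfer the ordering of $i$ and $p$ from $u$ to $v$.
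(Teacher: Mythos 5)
Your setup and easy case match the paper: both proofs dispose of $x_j=x_i+1$ via Lemma~\ref{lem:consecutive} and are left with $x_j\leq x_i$. From there, however, your route diverges and runs into a gap that you yourself flag but do not close. You reduce the claim to $r^u_i(v)\leq r^u_p(v)$, where $p$ is the penultimate position of $r^u_j$ (this reduction is sound: one can check that $i<p<j$, that $x_p=x_j-1$, and that $r^u_j=r^u_p\ms\X_c$ by the same minimization argument as in Lemma~\ref{lem:consecutive}). But the final comparison is exactly where all the difficulty sits, and the proposed repair by induction on $x_j$ does not work as stated: the pair $(i,p)$ has \emph{different} labels ($c$ versus $b$), so the lemma being proved cannot serve as its own induction hypothesis. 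A statement strong enough to cover different-labelled pairs is essentially Lemma~\ref{lem:noncomm}, which the paper derives \emph{from} the present lemma and which moreover assumes both words are length-minimal in their $\sim_k$-class --- an assumption not available here. So the induction is either circular or requires a genuinely new, unformulated invariant.

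A telling symptom is how little your argument uses the hypothesis $x_i+y_i\leq k+1$: only to conclude $x_i\leq k-1$ so that $r^u_i\ms\X_c$ has length at most $k$. The paper's proof of the hard case uses this hypothesis in an essential, quantitative way, and the missing idea is a \emph{two-sided} one: take a minimal \Yra $q\ms\Y_c$ of length $y_i$ with $q\ms\Y_c(u)=i$, let $w_i,w_j,z$ be the words of $r^u_i$, $r^u_j$ and $q$, and argue by contradiction from $r^u_i(v)\geq r^u_j(v)$ that $w_iz\prec u\Rightarrow w_iz\prec v\Rightarrow w_jz\prec v\Rightarrow w_jz\prec u$, where the transfers between $u$ and $v$ are legitimate precisely because $\abs{w_jz}=x_j+y_i-1\leq x_i+y_i-1\leq k$. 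The last containment forces $q(u)>j$ and hence $q\ms\Y_c(u)\geq j>i$, contradicting $q\ms\Y_c(u)=i$. Your purely forward, \X-ranker-only argument has no mechanism to exploit $y_i$ in this way, and without it I do not see how the step $r^u_i(v)\leq r^u_p(v)$ can be established.
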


\begin{proof}
We have $x_i \leq k$ and $x_j \leq k$. Therefore, 
both $r^u_i(v)$ and $r^u_j(v)$ are defined because this only depends on subwords of length at most $k$ which are identical for $u$ and $v$. Let $c = a_i$.
Since $j = r^u_i \ms \X_c (u)$, we have $x_j \leq x_i + 1$. If $x_j = x_i + 1$, then $r^u_j = r^u_i \ms \X_c$ by Lemma~\ref{lem:consecutive} and hence $r^u_i(v) < r^u_j(v)$. Therefore, we can assume $x_j \leq x_i$. Suppose that $r^u_i(v) \geq r^u_j(v)$. Let $q \ms \Y_c$ be a \Yra with $q \ms \Y_c(u) = i$ and $\abs{q \ms \Y_c} = y_i$. Let $w_i$ be the word corresponding to $r^u_i$, let $w_j$ be the word corresponding to $r^u_j$, and let $z$ be the word corresponding to $q$. We have:
\begin{align*}
& w_i z \prec u \qquad && \text{since $r^u_i(u) = i < q(u)$}
\\ \Rightarrow \ & w_i z \prec v && \text{since $\abs{w_i z} = x_i + y_i - 1 \leq k$ and $u \sim_k v$}
\\ \Rightarrow \ & w_j z \prec v && \text{since $r^u_i(v) \geq r^u_j(v)$}
\\ \Rightarrow \ & w_j z \prec u && \text{since $\abs{w_j z} = x_j + y_i - 1 \leq x_i + y_i -1 \leq k$}
\\ \Rightarrow \ & q(u) > j
\\ \Rightarrow \ & q\ms \Y_c(u) \geq j > i.
\end{align*}
This contradicts $q \Y_c(u) = i$. Therefore, we have $r^u_i(v) < r^u_j(v)$.
\end{proof}

\begin{theorem}\label{thm:del}
  If $u$ is a word such that the attribute $(x_i,y_i)$ of every position $i$ satisfies $x_i + y_i \leq k+1$, then $u$ has minimal length within its $\sim_k$-class.
  \label{thm:minlen}
\end{theorem}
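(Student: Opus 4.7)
The plan is to argue by contradiction: assume there exists $v \in A^*$ with $u \sim_k v$ and $\abs{v} < \abs{u}$, and show that $v$ must contain at least as many occurrences of every letter as $u$ does, which will immediately force $\abs{v} \geq \abs{u}$.

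Fix a letter $c \in A$ and list the $c$-positions of $u$ in increasing order as $i_1 < i_2 < \cdots < i_m$. Each pair $(i_\ell, i_{\ell+1})$ is then a pair of consecutive $c$-positions in the sense required by Lemma~\ref{lem:orderpreserve}. The hypothesis of the theorem supplies $x_{i_\ell} + y_{i_\ell} \leq k+1$ for every $\ell$, and since $y_{i_{\ell+1}} \geq 1$ (rankers are nonempty by definition) we also have $x_{i_{\ell+1}} \leq k$; both preconditions of the lemma are satisfied. Applying it to each consecutive pair yields
\begin{equation*}
  r^u_{i_1}(v) < r^u_{i_2}(v) < \cdots < r^u_{i_m}(v).
\end{equation*}

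Each canonical ranker $r^u_{i_\ell}$ is an \Xra of length $x_{i_\ell} \leq k$, so $r^u_{i_\ell}(v)$ is defined (the corresponding subword has length at most $k$ and is therefore shared by $u$ and $v$), and because the final modality of $r^u_{i_\ell}$ is $\X_c$ this value lies on a $c$-position of $v$. Hence $\ell \mapsto r^u_{i_\ell}(v)$ is a strictly increasing injection from $\os{1,\dots,m}$ into the set of $c$-positions of $v$, so $v$ contains at least $m$ occurrences of $c$. Summing this over all letters $c \in A$ yields $\abs{v} \geq \abs{u}$, contradicting the assumption $\abs{v} < \abs{u}$.

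The only delicate point is verifying the preconditions of Lemma~\ref{lem:orderpreserve} simultaneously for every consecutive pair of like-labelled positions; once that has been checked, the remainder is a routine pigeonhole count carried out independently for each letter. I therefore expect no substantial obstacle beyond what Lemma~\ref{lem:orderpreserve} already absorbs: that lemma is what converts the $\sim_k$-equivalence into an order-preserving injection between equally labelled positions, and it does all the real work of this theorem.
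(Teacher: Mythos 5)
Your proposal is correct and follows essentially the same route as the paper: apply Lemma~\ref{lem:orderpreserve} to consecutive equally-labelled positions to get a strictly increasing (hence injective) map from the $c$-positions of $u$ into the $c$-positions of $v$, and count letter by letter. The paper phrases this directly as $\abs{u}\leq\abs{v}$ for a shortest $v$ rather than by contradiction, and your explicit verification of the precondition $x_{i_{\ell+1}}\leq k$ is a detail the paper leaves implicit, but the argument is the same.
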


\begin{proof}
Let $v$ be a shortest word satisfying $u \sim_k v$. Let $\rho$ map the position $j$ of $u$ to the position $r^u_j(v)$ of $v$. Consider some letter $c$ occurring in~$u$. Then, by Lemma~\ref{lem:orderpreserve}, the function $\rho$ maps the $i$-th occurrence of the letter $c$ in $u$ to the $i$-th occurrence of the letter $c$ in $v$. In particular, the word $v$ has at least as many occurrences of $c$ as $u$. This holds for all letters $c$ in $u$, hence, $\abs{u} \leq \abs{v}$.
\end{proof}

\begin{example}\label{ex:length-minimal}
Consider $u = bacbaabada$ from Example~\ref{ex:second} and let $k = 3$. As
explained in Example~\ref{ex:double-del}, we must not delete both position $5$
and position $6$. However, we can delete positions $5$ and $8$ to obtain a
$\sim_k$-equivalent word with the following attributes:
\begin{equation*}
  \aletter{b}{1}{2}
  \aletter{a}{1}{2}
  \aletter{c}{1}{1}
  \aletter{b}{2}{2}
  \aletter{a}{2}{2}
  \aletter{b}{3}{1}
  \aletter{d}{1}{1}
  \aletter{a}{2}{1}
\end{equation*}
By Theorem~\ref{thm:minlen}, there is no shorter word in the same
$\sim_k$-class.
\qed
\end{example}

\section{Commutation}\label{sec:comm}

In the previous section, we described how to successively delete letters of a
word in order to obtain a length-minimal $\sim_k$-equivalent word.
It remains to show how to further transform a word of minimal length into shortlex normal form.
In the first two lemmas, we give a sufficient condition which allows us to
commute letters $b$ and $a$ while preserving the $\sim_k$-class.

\begin{lemma}\label{lem:commparam}
  Consider two words $ubav$ and $uabv$ with $a,b \in A$. Let $(x_\ell,y_\ell)$ denote the attribute of position $\ell$ in $ubav$, and let $(x'_\ell,y'_\ell)$ denote the attribute of position $\ell$ in $uabv$. Suppose that $\abs{ub} = i$ and that the attributes $(x_i,y_i)$ and $(x_{i+1},y_{i+1})$ satisfy $x_i = x_{i+1}$. Then all positions $\ell$ satisfy $x'_\ell = x_\ell$.
\end{lemma}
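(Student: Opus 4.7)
The plan is to establish $x'_\ell = x_\ell$ by induction on $\ell$, based on the following recurrence. Let $c$ be the label at position $\ell$, and let $p$ denote the largest position strictly less than $\ell$ carrying label $c$ (using the convention $p = 0$ and $x_0 = 0$ when no such position exists). Then
\begin{equation*}
  x_\ell = 1 + \min\os{x_q : p \leq q \leq \ell - 1},
\end{equation*}
because every \Xra reaching $\ell$ must end with $\X_c$, its penultimate position must lie in $[p, \ell-1]$, and the prefix before the final $\X_c$ needs length at least $x_q$. An identical recurrence governs $x'_\ell$ in $uabv$.

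For $\ell < i$ the length-$\ell$ prefixes of $ubav$ and $uabv$ coincide, so $x'_\ell = x_\ell$ is trivial. For $\ell \in \os{i, i+1}$ I will unfold the recurrence. Let $a_s$ and $b_t$ denote the largest $a$- and $b$-positions inside $u$ (or $0$ if absent). Using that $x'_q = x_q$ for every $q < i$, the recurrence yields
\begin{align*}
  x_i     &= 1 + \min\os{x_q : b_t \leq q \leq i-1}, \\
  x'_i    &= 1 + \min\os{x_q : a_s \leq q \leq i-1}, \\
  x_{i+1} &= 1 + \min(x_i,\, x'_i - 1).
\end{align*}
Since $x_i - 1 < x_i$, the hypothesis $x_i = x_{i+1}$ forces the inner minimum to equal $x'_i - 1$, giving $x'_i = x_i$. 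A symmetric computation at position $i+1$ in $uabv$ (whose previous $b$-position is again $b_t$) then delivers $x'_{i+1} = x_{i+1}$.

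For the inductive step $\ell > i+1$, position $\ell$ carries the same label $c$ in both words. The previous $c$-position before $\ell$ agrees in the two words, except in the single case where $c \in \os{a,b}$ and no other $c$-position occurs in $\os{i+2, \ldots, \ell-1}$; in that exceptional case the two previous $c$-positions together form $\os{i, i+1}$. By induction $x'_q = x_q$ for every $q < \ell$, so the two minima in the recurrences differ by at most the inclusion of $q = i$ versus $q = i+1$; since $x_i = x_{i+1}$ (and $x'_i = x_i$, $x'_{i+1} = x_{i+1}$), this does not change the minimum. Hence $x'_\ell = x_\ell$.

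The main obstacle is the base case $\ell \in \os{i, i+1}$. Because the two swapped positions carry \emph{different} labels in the two words, the recurrences for $x_\ell$ and $x'_\ell$ reference different ``previous-$c$'' positions inside $u$ and are not directly comparable; the hypothesis $x_i = x_{i+1}$ is precisely what bridges them, and it is the only place in the proof where that hypothesis is used in an essential way.
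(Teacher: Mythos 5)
Your proof is correct, but it takes a genuinely different route from the paper. You work entirely with the dynamic-programming recurrence $x_\ell = 1 + \min\os{x_q : p \leq q \leq \ell-1}$ (which is exactly the invariant behind Algorithm~\ref{alg:xcoord} and Lemma~\ref{lem:xcorrect}) and compare the two runs of that recurrence on $ubav$ and $uabv$ by induction on $\ell$; the hypothesis $x_i = x_{i+1}$ enters only to rule out $\min(x_i, x'_i-1) = x_i$ at the swap positions, and your treatment of the shifted ``previous same-letter position'' in the inductive step is right, since the extra candidate $q=i$ (resp.\ $q=i+1$) has the same $x$-value as one already in the range. The paper instead argues structurally about the sets $R^u_\ell$ of length-minimal \Xras: it shows that no minimal ranker reaching position $i+1$ can visit position $i$ in either word, whence $R^{ubav}_i = R^{uabv}_{i+1}$, $R^{ubav}_{i+1} = R^{uabv}_i$, and $R^{ubav}_\ell = R^{uabv}_\ell$ otherwise. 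That version is slightly stronger (it preserves the ranker sets themselves, not just their lengths) and needs no separately justified recurrence; yours is more mechanical and makes visible exactly where $x_i = x_{i+1}$ is used. Two small points you should make explicit: the case $a=b$ (trivial, or vacuous under the hypothesis by the argument of Lemma~\ref{lem:no-aa}, and needed since your formula for $x_{i+1}$ assumes position $i$ is not an $a$-position), and the $\leq$ direction of your recurrence (for every $q$ with $p \leq q \leq \ell-1$, appending $\X_c$ to a minimal ranker reaching $q$ lands exactly at $\ell$ because no $c$-position lies strictly between $q$ and $\ell$) --- your one-line justification only states the $\geq$ direction. Neither is a real gap.
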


\begin{proof}
We can assume that $a\neq b$.
It suffices to show that no ranker in $R^{ubav}_{i+1}$ visits position $i$ in
$ubav$ and no ranker in $R^{uabv}_{i+1}$ visits position $i$ in $uabv$.
This implies that for all $\ell \in \os{1, \dots, n}$, no ranker in
$R^{ubav}_\ell$ or in $R^{uabv}_\ell$ visits both $i$ and $i+1$ in the
corresponding words and thus, we have $R^{ubav}_{i} = R^{uabv}_{i+1}$ and
$R^{ubav}_{i+1} = R^{uabv}_i$ as well as $R^{ubav}_\ell = R^{uabv}_\ell$ for
$\ell \not\in \os{i, i+1}$. Note that all rankers in $R^{ubav}_i$ and in
$R^{ubav}_{i+1}$ have length $x_i = x_{i+1}$.

Suppose, for the sake of contradiction, that a ranker $r \in R^{ubav}_{i+1}$
visits position $i$ in $ubav$.
Then, we can write $r = s \X_b \X_a$ with $s\X_b(ubav) = i$. Note that
$\abs{s\X_b} \ge x_i$ by the definition of $x_i$.
Since $x_{i+1} = x_i \le \abs{s\X_b}$, there exists a ranker of length at most
$\abs{s\X_b} < \abs{r}$ reaching position $i+1$ in $ubav$, contradicting the
choice of $r$.

Suppose that a ranker $r \in R^{uabv}_{i+1}$ visits position $i$ in $uabv$.
Let $r = s \X_a \X_b$ with $s\X_a(uabv) = i$. Note that $s\X_a(ubav) =
i+1$ and, since $x_{i+1} = x_i$, there exists a ranker $\hat s$ of length at
most $\abs{s\X_a}$ such that $\hat s(ubav) = i$.
Now, $\hat s$ is a ranker of length $\abs{\hat s} \le \abs{s\X_a} < \abs{r}$
with $\hat s(uabv) = i+1$, a contradiction to $r \in R^{uabv}_{i+1}$.
\end{proof}

\begin{proposition}\label{prp:commneighbors}
  Let $ubav$ be a word with $\abs{ub} = i$ and attributes $(x_i,y_i)=(x_{i+1},y_{i+1})$ satisfying $x_i + y_i = k+1$.
  Then $ubav \sim_k uabv$.
\end{proposition}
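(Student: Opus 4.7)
The plan is to prove directly that for every $w \in A^*$ with $\abs{w} \leq k$, we have $w \prec ubav$ iff $w \prec uabv$. The two directions are entirely analogous (using Lemma~\ref{lem:commparam} together with its left-right mirror to transfer the attribute hypotheses to $uabv$), so we sketch only $w \prec ubav \Rightarrow w \prec uabv$.

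Fix an embedding of $w$ in $ubav$ and distinguish cases according to whether it matches position $i$ (necessarily to a $b$-letter of $w$), position $i+1$ (to an $a$-letter), both, or neither. In the \emph{neither} case, the very same embedding is valid in $uabv$, since the two words agree outside positions $i$ and $i+1$. If exactly one of the two positions is matched, we move the corresponding letter of $w$ to the other position in $uabv$; strict monotonicity is preserved because the neighbouring matched positions lie at indices strictly below $i$ and strictly above $i+1$, and the letter at the twin position in $uabv$ is precisely the letter we are placing.

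The crux is the case where \emph{both} $i$ and $i+1$ are matched, yielding a factorisation $w = pbaq$ with $p \prec u$ and $q \prec v$. If $pb \prec u$, we obtain an embedding of $w$ in $uabv$ by placing $pb$ inside $u$, then $a$ at position $i$, and finally $q$ inside $bv$ (using $q \prec v$). Symmetrically, if $aq \prec v$, we place $p$ in $u$, $b$ at position $i+1$, and $aq$ in $v$. So it remains to rule out the subcase $pb \not\prec u$ and $aq \not\prec v$.

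In this remaining subcase the hypothesis $x_i + y_i = k+1$ delivers the contradiction. Since $p \prec u$ but $pb \not\prec u$, the \Xra defined by $pb$, when evaluated on $ubav$, cannot reach any $b$-position inside $u$ and therefore must land exactly at position $i$; this forces $\abs{p}+1 \geq x_i$. Analogously, the \Yra defined by $aq$ must land exactly at position $i+1$ on $ubav$, giving $\abs{q}+1 \geq y_{i+1} = y_i$. Summing these two bounds yields $\abs{w} = \abs{p}+\abs{q}+2 \geq x_i + y_i = k+1$, contradicting $\abs{w} \leq k$. The main delicate point is the clean derivation of the two ranker landing positions from the failures $pb \not\prec u$ and $aq \not\prec v$; this rests on the leftmost (respectively rightmost) embedding viewpoint of rankers established in Section~\ref{sec:prelim}.
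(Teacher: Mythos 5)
Your proposal is correct and follows essentially the same route as the paper: the only non-trivial case is an embedding using both swapped positions with $pb \not\prec u$ and $aq \not\prec v$, where the induced \Xra and \Yra land exactly at positions $i$ and $i+1$, giving $\abs{w} \geq x_i + y_{i+1} = k+1$, a contradiction; the converse direction likewise invokes Lemma~\ref{lem:commparam} and its dual to transfer the attributes. The paper merely packages your first three cases more compactly by reducing to $w \not\prec uav$ and $w \not\prec ubv$ (both of which are subwords of $uabv$), but the substance is identical.
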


\begin{proof}
Suppose that there exists a word $w$ with $\abs{w} \leq k$ such that $w \prec ubav$ but $w \not\prec uav$ and $w \not\prec ubv$. Then we can write $w = w_1 ba w_2$ such that $w_1 b \prec ub$, $w_1 b \not\prec u$, $aw_2 \prec av$, and $aw_2 \not\prec v$. Thus, the word $w_1 b$ defines an \Xra $r$ with $r(ubav) = i$ and, similarly, $a w_2$ defines a \Yra $s$ with $s(ubav) = i+1$. We see that $\abs{r} + \abs{s} = \abs{w} \leq k$, but this contradicts $\abs{r} + \abs{s} \geq x_i + y_{i+1} = k+1$. Therefore, every subword of $ubav$ of length at most~$k$ is also a subword of $uabv$.

By Lemma~\ref{lem:commparam} and its left-right dual, the attributes of the positions $i$ and $i+1$ in $uabv$ are both identical to $(x_i,y_i)$. Therefore, the same reasoning as above shows that every subword of $uabv$ of length at most $k$ is also a subword of $ubav$.  This shows $ubav \sim_k uabv$.
\end{proof}

\begin{example}\label{ex:commutation}
Let us reconsider the length-minimal word $u = bacbabda$ from
Example~\ref{ex:length-minimal} and let again $k = 3$. The attributes are as
follows:
\begin{equation*}
  \aletter{b}{1}{2}
  \aletter{a}{1}{2}
  \aletter{c}{1}{1}
  \aletter{b}{2}{2}
  \aletter{a}{2}{2}
  \aletter{b}{3}{1}
  \aletter{d}{1}{1}
  \aletter{a}{2}{1}
\end{equation*}
The attributes $(x_4, y_4)$ and $(x_5, y_5)$ at positions $4$ and $5$ satisfy
$x_4 = x_5$, $y_4 = y_5$ and $x_4 + y_4 = k+1$.
By Proposition~\ref{prp:commneighbors}, we obtain $bacabbda \sim_k u$.
Note that the attributes $(x_1, y_1)$ and $(x_2, y_2)$ at the first two
positions satisfy $x_1 = x_2$, $y_1 = y_2$ but $x_1 + x_2 < k+1$. And, in fact,
$abcbabda \not\sim_k u$ since $abc \prec abcbabda$ but $abc \not\prec u$.
\qed
\end{example}

It remains to show that repeated application of the commutation rule described
in Proposition~\ref{prp:commneighbors} actually suffices to obtain the
lexicographically smallest representative of a $\sim_k$-class.
The next lemma shows, using canonical rankers, that indeed all length-minimal
representatives of a $\sim_k$-class can be transformed into one another using
this commutation rule.

\begin{lemma}\label{lem:noncomm}
Let $u \sim_k v$ such that both words $u$ and $v$ have minimal length in their $\sim_k$-class. Let $(x_\ell,y_\ell)$ denote the attribute of position $\ell$ of $u$. Consider two positions $i<j$ of $u$. If either $(x_i,y_i) \neq (x_j,y_j)$ or $x_i + y_i < k+1$, then $r^u_i(v) < r^u_j(v)$.
\end{lemma}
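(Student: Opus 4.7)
The plan is to argue by contradiction: assuming $r^u_i(v) \geq r^u_j(v)$, I will produce a single subword of length at most $k$ that witnesses $u \not\sim_k v$. Throughout I write $\alpha_\ell$ for the word defining the canonical \Xra $r^u_\ell$, and $s^u_\ell$ for the canonical \Yra of length $y_\ell$ reaching $\ell$ in $u$ (defined by the left-right dual of the construction of $r^u_\ell$), with corresponding word $\beta_\ell$.

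The first step will be the identity $r^u_\ell(v) = s^u_\ell(v)$ for every position $\ell$. Since $u$ is length-minimal, Theorem~\ref{thm:minlen} applied to $u$ yields $\abs{u} \leq \abs{v}$, and by symmetry (as $v$ is length-minimal too) $\abs{u} = \abs{v}$; in particular $u$ and $v$ contain the same number of occurrences of each letter. The proof of Theorem~\ref{thm:minlen} already shows that $\ell \mapsto r^u_\ell(v)$ is order-preserving on positions sharing a common label $c$, so by a cardinality argument it bijects the $m$-th occurrence of $c$ in $u$ onto the $m$-th occurrence of $c$ in $v$. The left-right dual of Lemma~\ref{lem:orderpreserve} gives the analogous statement for $\ell \mapsto s^u_\ell(v)$, and therefore forces the two maps to coincide.

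Now assume for contradiction that $r^u_i(v) \geq r^u_j(v)$. The bijection above together with $i \neq j$ makes the inequality strict. Consider the two candidate witnesses $\alpha_i \beta_j$ and $\alpha_j \beta_i$, of lengths $L_1 = x_i + y_j$ and $L_2 = x_j + y_i$ respectively. Using the ranker/subword correspondence recalled at the end of Section~\ref{sec:prelim}: $\alpha_i \beta_j \prec u$ because $r^u_i(u) = i < j = s^u_j(u)$, but $\alpha_i \beta_j \not\prec v$ because $r^u_i(v) > r^u_j(v) = s^u_j(v)$; symmetrically, $\alpha_j \beta_i \not\prec u$ while $\alpha_j \beta_i \prec v$. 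Hence, whenever $\min(L_1, L_2) \leq k$, one of these two words already contradicts $u \sim_k v$.

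It remains to rule out $L_1, L_2 \geq k+1$. The identity $L_1 + L_2 = (x_i + y_i) + (x_j + y_j) \leq 2(k+1)$ forces $L_1 = L_2 = k+1$; combined with $x_i + y_i \leq k+1$ and $x_j + y_j \leq k+1$, the equalities $x_i + y_j = x_j + y_i = k+1$ quickly imply $x_i = x_j$, $y_i = y_j$, and $x_i + y_i = k+1$, contradicting the standing hypothesis. The only genuinely delicate point is the identity $r^u_\ell(v) = s^u_\ell(v)$ of the first step, since the paper makes canonicity explicit only for \Xras; once that identity is in hand, the remainder is a symmetric two-subword construction together with a one-line arithmetic count, and the hypothesis of Lemma~\ref{lem:noncomm} emerges as exactly the right boundary between admissible and inadmissible commutations.
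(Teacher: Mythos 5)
Your proof is correct and follows essentially the same route as the paper: the key identity $r^u_\ell(v)=s^u_\ell(v)$ obtained from Lemma~\ref{lem:orderpreserve} and its dual, the two witness subwords $\alpha_i\beta_j$ and $\alpha_j\beta_i$, and the arithmetic showing that at least one of them has length at most $k$ unless $(x_i,y_i)=(x_j,y_j)$ and $x_i+y_i=k+1$. The only (harmless) reorganization is that you run a single contradiction argument covering the equal-label case as well, whereas the paper dispatches that case separately via Lemma~\ref{lem:orderpreserve} and then splits on which of $x_i+y_j\leq k$ or $x_j+y_i\leq k$ holds.
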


\begin{proof}
If $i$ and $j$ have the same label, then the claim follows from Lemma~\ref{lem:orderpreserve}. In the remainder of this proof, let their labels be different. In particular, we cannot have $r^u_i(v) = r^u_j(v)$. Suppose $(x_i,y_i) \neq (x_j,y_j)$ or $x_i + y_i < k+1$. If $x_i + y_j \geq k+1$ and $x_j+y_i \geq k+1$, then, by minimality, $x_i + y_i = k+1$ and $x_j + y_j = k+1$. This yields $x_i + y_j = k+1$ and $x_j+y_i = k+1$. Thus, $x_i = x_i + x_j + y_j - k-1 = x_j$ and, similarly, $y_i = y_i + x_j + y_j - k-1 = y_j$; this shows $(x_i,y_i) = (x_j,y_j)$, a contradiction. Therefore, we have either $x_i + y_j \leq k$ or $x_j+y_i \leq k$.

Let $p_i$ and $p_j$ be the words defining the rankers $r^u_i$ and $r^u_j$, respectively. Symmetrically to the definition of the canonical \Xra, we could also define canonical \Yras $s^u_i$ and~$s^u_j$ such that $s^u_i(u) = i$, $\abs{s^u_i} = y_i$, $s^u_j(u) = j$, and $\abs{s^u_j} = y_j$. If the label $c$ of $u$ at position $i$ is the $\ell$-th occurrence of the letter $c$ in $u$, then, by Lemma~\ref{lem:orderpreserve}, both $r^u_i$ and $s^u_i$ end up at the position with the $\ell$-occurrence of the letter $c$ in $v$. This shows $r^u_i(v) = s^u_i(v)$. Similarly, we see that $r^u_j(v) = s^u_j(v)$. Let $q_i$ and $q_j$ be the words defining the rankers $s^u_i$ and $s^u_j$, respectively.

First, let $x_i + y_j \leq k$. Then $p_i q_j \prec u$ yields $p_i q_j \prec v$ since $u \sim_k v$ and $\abs{p_i q_j} = x_i + y_j \leq k$. This shows $r^u_i(v) < s^u_j(v) = r^u_j(v)$, as desired. 
Let now $x_j + y_i \leq k$ and assume $r^u_i(v) > r^u_j(v)$. Then $p_j q_i \prec v$ yields $p_j q_i \prec u$ and, thus, $j = r^u_j(u) < s^u_i(u) = i$. This is a contradiction; hence, $r^u_i(v) < r^u_j(v)$.
\end{proof}

Using the previous lemma, we can finally show that iterating the commutation procedure from Lemma~\ref{lem:commparam} and Proposition~\ref{prp:commneighbors} yields the desired shortlex normal form.

\begin{theorem}\label{thm:comm}
  Let $u=a_1 \cdots a_n$ with $a_i \in A$ be a length-minimal word within its $\sim_k$-class. Suppose that the attributes $(x_i,y_i)$ for all positions $i<n$ satisfy the following implication:
  \begin{equation}
    \text{If } (x_i,y_i) = (x_{i+1},y_{i+1}) \text{ and } x_i + y_i = k+1, \text{ then } a_i \leq a_{i+1}.
    \label{eq:comm}
  \end{equation}
  Then $u$ is the shortlex normal of its $\sim_k$-class.
\end{theorem}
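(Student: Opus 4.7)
The plan is to take an arbitrary length-minimal $v \in A^*$ with $u \sim_k v$ and show that $u$ is lexicographically smaller than or equal to $v$. I would transport positions of $u$ to positions of $v$ via the canonical rankers, setting $\rho(j) = r^u_j(v)$. By the argument from the proof of Theorem~\ref{thm:minlen} (which rests on Lemma~\ref{lem:orderpreserve}), $\rho$ sends the $\ell$-th occurrence of any letter $c$ in $u$ to the $\ell$-th occurrence of $c$ in $v$; since $u$ and $v$ have the same length and hence the same multiset of letters, $\rho$ is a label-preserving bijection on positions.

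Next, I would group the positions of $u$ into maximal intervals of consecutive positions sharing the same attribute $(x,y)$ with $x + y = k+1$; positions which fail this saturation condition, together with their neighbours whose attribute differs, form singleton intervals. Call these blocks $B_1, \ldots, B_s$ from left to right. Lemma~\ref{lem:noncomm} already guarantees that $\rho$ preserves the order of any two positions whose attributes differ or whose attribute sum is strictly less than $k+1$; and two positions in different blocks with the same saturated attribute are always separated by an intermediate position of different attribute (e.g., the first position immediately after the left block, whose attribute must differ from its predecessor by maximality of blocks), so applying Lemma~\ref{lem:noncomm} twice yields that $\rho$ preserves the order between any two positions in distinct blocks.

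With this in hand, since each block is an interval and $\rho$ is a bijection preserving between-block order, the images $\rho(B_1), \ldots, \rho(B_s)$ are contiguous intervals of $v$ of the same sizes as $B_1, \ldots, B_s$ in the same left-to-right order; therefore $\rho(B_p) = B_p$ as sets for every $p$. Because $\rho$ preserves labels, the multiset of letters at the positions of $B_p$ is identical in $u$ and in $v$; by hypothesis~\eqref{eq:comm}, the letters of $u$ within each $B_p$ are non-decreasing, which is the unique lexicographically smallest arrangement of that multiset. Comparing $u$ and $v$ position by position therefore gives $u \leq v$ lexicographically, which proves the claim.

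The main obstacle is the second step: arguing that the block decomposition of $u$ is preserved by $\rho$. The subtlety is that two positions with equal saturated attribute in different blocks are \emph{a priori} incomparable via Lemma~\ref{lem:noncomm}; one has to exhibit a suitable intermediate position of differing attribute and chain the lemma. Everything else is essentially bookkeeping: an order-preserving bijection between two interval partitions of $\{1, \ldots, n\}$ with matching block sizes in the same order is forced to coincide on the partition, and from there the lex-minimality of the sorted arrangement of a multiset is standard.
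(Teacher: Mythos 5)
Your argument is correct, and it shares the paper's two essential ingredients: the canonical-ranker map $\rho(j) = r^u_j(v)$, which preserves labels and occurrence counts via Lemma~\ref{lem:orderpreserve}, and the chaining of Lemma~\ref{lem:noncomm} through an intermediate position whose attribute differs in order to compare positions in distinct blocks. Where you genuinely diverge is in how the conclusion is extracted. The paper takes $v$ to be the shortlex normal form, supposes $\rho$ fails to be order-preserving at a minimal position $i$ (choosing $j=\rho^{-1}(i)$), reduces to the case of a run of equal saturated attributes between $i$ and $j$, and then uses hypothesis~\eqref{eq:comm} to get $a_i < a_j$, so that $u$ would be lexicographically smaller than $v$ --- contradicting the lex-minimality of the normal form. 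You instead let $v$ be an \emph{arbitrary} length-minimal representative and show directly that $u$ is lexicographically at most $v$: between-block order preservation plus the interval-partition counting argument forces $\rho(B_p)=B_p$ setwise, so each block of $v$ carries the same letter multiset as the corresponding block of $u$, and sortedness of $u$ within blocks (which is exactly what~\eqref{eq:comm} gives) finishes the comparison. Your route buys a cleaner statement --- $u$ is lex-minimal among all length-minimal representatives, with the block-fixing property made explicit (the paper only states this informally in the summary following the theorem) --- at the cost of some extra bookkeeping; the paper's minimal-counterexample argument is shorter because it only ever handles one offending pair. One presentational nit: your description of the singleton blocks (``positions which fail this saturation condition, together with their neighbours whose attribute differs, form singleton intervals'') is muddled; what your argument actually needs is simply that a non-saturated position, or a saturated position not adjacent to a position with the identical attribute, forms a block of size one, and that the blocks partition $\os{1,\dots,n}$ into intervals.
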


\begin{proof}
  Let $v$ be the shortlex normal form of the $\sim_k$-class of $u$. We want to show that $u=v$. Let $\rho$ map position $i$ of $u$ to position $r^u_i(v)$ of $v$. As we have seen in the proof of Theorem~\ref{thm:del}, the function $\rho$ is bijective. It remains to show that $\rho$ is order-preserving. By contradiction, assume that there are positions $i$ and $j$ of $u$ with $i<j$ such that $\rho(i) > \rho(j)$; let $i$ be minimal with this property and let $i = \rho(j)$, \ie, we choose $j$ to be the preimage of position $i$ in $v$.
  We already know that $\rho(i) < \rho(j)$ in all of the following cases:
  \begin{itemize}
  \item $a_i = a_j$ \,(by Lemma~\ref{lem:orderpreserve}),
  \item $(x_i,y_i) \neq (x_j,y_j)$ \,(by Lemma~\ref{lem:noncomm}),
  \item $x_i + y_i < k+1$ \,(again by Lemma~\ref{lem:noncomm}).
  \end{itemize}
  Therefore, the only remaining case is $a_i \neq a_j$, $(x_i,y_i) = (x_j,y_j)$ and $x_i + y_i = k+1$. First, suppose that $(x_i,y_i) = (x_\ell,y_\ell)$ for all $\ell \in \os{i,\ldots,j}$. Then, by the implication in Equation~\eqref{eq:comm}, we have $a_i \leq \cdots \leq a_j$. Since $a_i \neq a_j$, we have $a_i < a_j$. Now, $u$ has the prefix $a_1 \cdots a_i$ and $v$ has the prefix $a_1 \cdots a_{i-1} a_j$. In particular, $u$ is lexicographically smaller than $v$; this is a contradiction. Next, suppose that there exists a position $\ell \in \os{i,\ldots,j}$ with $(x_i,y_i) \neq (x_\ell,y_\ell)$. Note that $i < \ell < j$. By Lemma~\ref{lem:noncomm}, we have $\rho(i) < \rho(\ell)$ and $\rho(\ell) < \rho(j)$. In particular, we have $\rho(i) < \rho(j)$ in contradiction to our assumption. Altogether, this shows that the situation $i<j$ and $\rho(i) > \rho(j)$ is not possible, \ie, $\rho$ is order-preserving. Hence, $u=v$ as desired.
\end{proof}

We summarize our knowledge on shortest elements of a $\sim_k$-class as follows.
A word $u$ has minimal length within its $\sim_k$-class if and only if all
attributes $(x_i,y_i)$ satisfy $x_i + y_i \leq k+1$. The canonical rankers
define a bijective mapping between any two shortest words $u$ and $v$ of a
common $\sim_k$-class. This map preserves the labels and the attributes. It is
almost order preserving, with the sole exception that $i<j$ could lead to
$r^u_i(v) > r^u_j(v)$ whenever the attributes in $u$ satisfy both $x_i + y_i =
k+1$ and $(x_i,y_i) = (x_\ell,y_\ell)$ for all $\ell \in \os{i,\ldots,j}$.

\section{Computing shortlex normal forms}\label{sec:alg}

The results from the previous sections immediately lead to the following algorithm for computing shortlex normal forms. First, we successively delete single letters of the input word until the length is minimal. Let $a_1 \cdots a_n$ be the resulting word. In the second step, we lexicographically sort maximal factors $a_i \cdots a_j$ with attributes $(x_i,y_i) = \cdots = (x_j,y_j)$ and $x_i + y_i = k+1$.
We now improve the first step of this algorithm.

\begin{algorithm}[h]
\caption{Computing the $x$-coordinates of $a_1 \cdots a_n$}\label{alg:xcoord}
\begin{algorithmic}[1]
  \ForAll {$a \in A$} $n_a \gets 1$
  \EndFor
  \For {$i \gets 1,\ldots,n$} 
  \State suppose $a_i = c$
  \State $x_i \gets n_c$
  \State $n_c \gets  n_c + 1$
  \ForAll {$a \in A$} $n_a \gets \min(n_a,n_c)$
  \EndFor
  \EndFor
\end{algorithmic}
\end{algorithm}

The following lemma proves the correctness of Algorithm~\ref{alg:xcoord}. Its running time is in $\Oh(\abs{A} n)$ since there are $n$ iterations of the main loop, and each iteration updates $\abs{A}$ counters.

\begin{lemma}\label{lem:xcorrect}
  Algorithm~\ref{alg:xcoord} computes the correct $x$-coordinates of the attributes of $a_1 \cdots a_n$.
\end{lemma}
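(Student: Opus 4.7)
\noindent
The plan is to prove Lemma~\ref{lem:xcorrect} by establishing, via induction on $i$, a loop invariant that pins down the semantic meaning of each counter~$n_a$. Setting $x_0 := 0$ for convenience, I would claim that after the $i$-th iteration of the main loop (with $i = 0$ corresponding to the initialization),
\[
n_a = \min\bigl\{\,x_p + 1 \;\big|\; 0 \leq p \leq i \text{ and no $a$-position lies in } (p, i]\,\bigr\}
\]
for every letter $a \in A$. Intuitively, $n_a$ records the length of the shortest \Xra of the form $r\X_a$ whose prefix $r$ ends at some position $p \leq i$ from which the first $a$-position is still strictly after~$i$. Correctness of $x_i \gets n_c$ then follows at once, because any minimum-length \Xra reaching a $c$-position $i$ has exactly this form with $p < i$ and no $c$-position in $(p, i-1]$.

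The base case $i = 0$ is immediate: the only admissible $p$ is $0$, and $x_0 + 1 = 1$ matches the initialization. For the inductive step at iteration $i$ with $c = a_i$, I would first use the invariant at step $i - 1$ to justify $x_i = n_c$, and then verify that the updates reestablish the invariant for index~$i$ by distinguishing two cases. For $a = c$, the condition ``no $c$-position in $(p, i]$'' forces $p = i$ because $i$ is itself a $c$-position, so the set on the right-hand side collapses to $\{x_i + 1\}$; this matches the value after the increment $n_c \gets n_c + 1$, and the subsequent $\min(n_c, n_c)$ is harmless. For $a \neq c$, the condition ``no $a$-position in $(p, i]$'' coincides with the old condition ``no $a$-position in $(p, i-1]$'' for $p \leq i - 1$ and is vacuously satisfied for the new candidate $p = i$; the new right-hand side is therefore the old one enlarged by $x_i + 1 = n_c$, which is exactly $\min(n_a, n_c)$.

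The main subtlety, and the place where I would be most careful, lies precisely in this case split: the counter $n_c$ corresponding to the letter just read behaves qualitatively differently from all other counters, because the fresh non-occurrence condition annihilates every previous candidate $p < i$ for this particular letter while only extending the candidate set for every other letter. Once the invariant is established for all $i$, the lemma follows immediately: at the beginning of iteration~$j$, reading $n_{a_j}$ yields $\min\{\,x_p + 1 \mid 0 \leq p \leq j-1,\ \text{no $a_j$-position in } (p, j-1]\,\}$, which by the standard characterization of \Xras ending in $\X_{a_j}$ is precisely $x_j$.
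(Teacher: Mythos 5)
Your proof is correct and follows essentially the same route as the paper: an induction on the loop index that maintains an invariant describing the semantics of the counters $n_a$, with the same case split on whether the counter's letter equals the letter just read. Your explicit closed-form invariant $n_a = \min\bigl\{x_p + 1 \mid 0 \le p \le i,\ \text{no $a$-position in } (p,i]\bigr\}$ is simply a sharper formalization of the paper's informal invariant (``if the next letter $a_i$ is $c$, then $x_i$ is $n_c$''), so no substantive difference in approach.
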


\begin{proof}
  The algorithm reads the input word from left to right, letter by letter. In each step it updates some of its counters $n_a$.
  The semantics of the counters $n_a$ is as follows: if the next letter $a_i$ is $c$, then $x_i$ is $n_c$. This invariant is true after the initialization in the first line of the algorithm.
  
  Suppose that we start an iteration of the loop at letter $a_i = c$. Then the invariant tells us that $x_i = n_c$. If $a_{i+1}$ were $c$, then one more step $\X_{c}$ would be needed for a ranker to reach position $i+1$, hence $n_c \gets  n_c + 1$. If $a_{i+1}$ were some letter $a \neq c$, then we could either use the ranker corresponding to the old value $n_a$ or we could use the ranker going to position~$i$ and from there do an $\X_{a}$-modality; the latter would yield a ranker whose length is the new value of $n_c$. We choose the shorter of these two options. Since all counter values were correct before reading position $i$, there is no other counter $n_a$ which needs to be updated before proceeding with position $i+1$.
\end{proof}

With Algorithm~\ref{alg:ycoord}, we give 
a procedure for computing the
$y$-coordinates of $a_1 \cdots a_n$ similar to Algorithm~\ref{alg:xcoord}, but with the modification that we mark
some letters for deletion.
The positions marked for deletion depend on the number $k$ in Simon's
congruence $\sim_k$. The computed $y$-coordinates are those where all marked
letters are actually deleted. We assume that the $x$-coordinates of the input
word are already known.

\begin{algorithm}[h]
\caption{Computing the $y$-coordinates of $a_1 \cdots a_n$ plus deletion}\label{alg:ycoord}
\begin{algorithmic}[1]
  \ForAll {$a \in A$} $n_a \gets 1$
  \EndFor
  \For {$i \gets n,\ldots,1$} 
  \State suppose $a_i = c$
  \If {$x_i + n_c \leq k+1$} 
  \State $y_i \gets n_c$
  \State $n_c \gets  n_c + 1$
  \ForAll {$a \in A$} $n_a \gets \min(n_a,n_c)$
  \EndFor
  \Else
  \State position $i$ is marked for deletion
  \EndIf
  \EndFor
\end{algorithmic}
\end{algorithm}

The algorithm correctly computes the $y$-coordinates of the word where all marked letters are deleted. This follows from the left-right dual of Lemma~\ref{lem:xcorrect} and the fact that the counters remain unchanged if a position is marked for deletion.

\begin{lemma}
  Let $u$ be the input for Algorithm~\ref{alg:ycoord} and let $v$ be the word with all marked letters removed. Then $u \sim_k v$.
\end{lemma}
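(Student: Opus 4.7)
My plan is to show that the word transformation performed by Algorithm~\ref{alg:ycoord} amounts to iteratively applying Proposition~\ref{prp:delLetter}, so that $\sim_k$ is preserved step by step and hence globally by transitivity. Write $u = a_1 \cdots a_n$ and, for $j \in \{1, \ldots, n+1\}$, let $v_j$ denote the word obtained from $u$ by deleting exactly those positions in $\{j, j+1, \ldots, n\}$ that the algorithm has marked by the time it has finished processing position $j$. Then $v_{n+1} = u$ and $v_1 = v$, so it suffices to establish $v_{j+1} \sim_k v_j$ for every $j$.

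If position $j$ is not marked, then $v_j = v_{j+1}$ and there is nothing to do. If position $j$ is marked, then the algorithm has just verified $x_j + n_c > k+1$ for $c = a_j$. To invoke Proposition~\ref{prp:delLetter} on the transition from $v_{j+1}$ to $v_j$, I must show that the attributes of position $j$ in $v_{j+1}$ are precisely $(x_j, n_c)$. The first coordinate is immediate: $v_{j+1}$ and $u$ agree on the prefix $a_1 \cdots a_j$, since no position $\leq j$ has been touched so far, and the $x$-coordinate of a position depends only on positions up to and including it. The second coordinate is the substantive point: I must argue that the current value of $n_c$ coincides with the $y$-coordinate of position $j$ in the partially reduced word $v_{j+1}$, not in the original $u$.

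The latter claim follows from a right-to-left induction on $j$, modeled on the left-right dual of Lemma~\ref{lem:xcorrect}. The decisive observation is that Algorithm~\ref{alg:ycoord} touches its counters only inside the if-branch; a marked position is handled by skipping the update, which is exactly what the dual of Algorithm~\ref{alg:xcoord} would do if the position were simply absent from the input. Hence, after processing positions $n, n-1, \ldots, j+1$, the counter configuration equals the one the deletion-free dual algorithm would produce on the suffix of $v_{j+1}$ that starts one position to the right of $j$, and in particular $n_c$ gives the correct $y$-coordinate of position $j$ in $v_{j+1}$. With both attributes validated, Proposition~\ref{prp:delLetter} yields $v_{j+1} \sim_k v_j$, and chaining these equivalences gives $u \sim_k v$. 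The main obstacle is exactly this bookkeeping: one has to confirm that the dynamic deletions during Phase~1b neither corrupt the remaining counter values nor the already computed $x$-coordinates to their left, which is the \emph{crucial property} flagged in Section~\ref{sec:outline}.
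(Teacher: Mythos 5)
Your proposal is correct and follows essentially the same route as the paper: both arguments peel off the marked positions one at a time from right to left, observe that the previously computed $x$-coordinate is still valid because nothing to the left has been touched, that the counter $n_c$ equals the $y$-coordinate in the partially reduced word because marked positions skip the counter update (the dual of Lemma~\ref{lem:xcorrect}), and then apply Proposition~\ref{prp:delLetter} together with transitivity of $\sim_k$. Your write-up merely makes explicit the intermediate words $v_j$ and the induction that the paper leaves implicit.
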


\begin{proof}
  Whenever a position $i$ with label $c$ is marked for deletion, the value $x_i$ is correct since no letter to the left of position $i$ is marked for deletion. The counter $n_c$ would be the correct $y$-coordinate for position $i$ if we deleted all positions which have been marked so far. By Proposition~\ref{prp:delLetter} we know that each deletion preserves the $\sim_k$-class.
\end{proof}

It remains to show that the $x$-coordinates are still correct for the resulting word in which all marked letters are deleted.

\begin{lemma}
  Consider a word $u = a_1 \cdots a_n$ with $x$-coordinate $x_\ell$ at position
  $\ell$. Let $i$ be the maximal position of $u$ such that $x_i + y_i > k + 1$
  and let $v = a_1 \cdots a_{i-1} a_{i+1} \cdots a_n$. The $x$-coordinate of
  position $\ell$ of $v$ is denoted by $x_\ell'$.
  Then, for all $j \in \os{i+1, \dots, n}$, we have $x_{j-1}' = x_j$.
\end{lemma}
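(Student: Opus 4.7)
The proof is by induction on $j \in \{i+1, \ldots, n\}$, comparing the recurrence
\[
x_j = 1 + \min\{x_\ell : 0 \leq \ell < j, \text{ no } a_j\text{-position lies in } (\ell, j)\}
\]
(with $x_0 = 0$) with the analogous one for $x'_{j-1}$ in $v$. As preliminaries I would establish two consequences of the maximality of $i$: (i) $a_i \neq a_{i+1}$, and (ii) $x_{i+1} \leq x_i$. For (i), if $a_i = a_{i+1}$ then $i, i+1$ are consecutive $a_i$-positions, forcing $x_{i+1} = x_i + 1$ and $y_i = y_{i+1} + 1$ (the only valid predecessor is $\ell = i$ in both recurrences), so $x_{i+1} + y_{i+1} = x_i + y_i > k+1$ contradicts maximality. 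For (ii), $x_{i+1} \leq x_i + 1$ always ($\ell = i$ is valid); if equality held, maximality would force $y_i \geq y_{i+1} + 2$, contradicting the dual bound $y_i \leq y_{i+1} + 1$.

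The base case $j = i+1$ is direct: since $a_i \neq a_{i+1}$, the validity conditions in the $u$- and $v$-recurrences agree (position $i$ is transparent to the ``no $a_{i+1}$'' requirement), and the only difference is the extra option $\ell = i$ in $u$ contributing $x_i + 1$; this is not minimal by (ii), so $x'_i = x_{i+1}$.

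For $j > i+1$, the induction hypothesis identifies $x'_{\ell - 1}$ with $x_\ell$ for $i < \ell < j$, so the $v$-recurrence can be written in $u$'s coordinates. The two recurrences then differ only in that $x'_{j-1}$ excludes $\ell = i$ and, when $a_i = a_j$, treats position $i$ as transparent in the ``no $a_j$'' condition. Several subcases collapse: if some $a_j$-position lies in $(i, j)$, both recurrences share the same valid $\ell$ and the same minimum; if $a_i \neq a_j$ and no $a_j$ lies in $(i, j)$, then $\ell = i+1$ is valid for both and contributes $x_{i+1} \leq x_i$ by (ii), so excluding $\ell = i$ leaves the minimum unchanged.

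The main obstacle is the remaining case $a_i = a_j$ with $i, j$ consecutive $a_i$-positions. Here deleting $i$ in $v$ uncovers valid predecessors $\ell < i$ (previously blocked by the $a_j$ at $i$), whose minimum simplifies via the recurrence for $x_i$ to exactly $x_i - 1$. Writing $M = \min_{i < \ell < j} x_\ell$, this yields $x'_{j-1} = 1 + \min(x_i - 1, M)$ versus $x_j = 1 + \min(x_i, M)$, and equality requires the strict inequality $x_i > M$. I would derive this by contradiction: if $x_i \leq M$, then $x_j = x_i + 1$, and the $y$-recurrence at the consecutive $a_i$-positions $i < j$ gives $y_i \leq y_j + 1$, so $x_j + y_j \geq x_i + y_i > k+1$, contradicting the constraint $x_j + y_j \leq k + 1$ enforced at $j > i$ by the maximality of $i$.
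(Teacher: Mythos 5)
Your proof is correct. It takes a recognizably different route from the paper's, although the single decisive use of the maximality of $i$ is the same in both. The paper argues by contradiction on a minimal position $j$ that is directly reachable from $i$ by one modality (implicitly reducing the general claim to such $j$), splits on whether an $a_i$-position lies strictly between $i$ and $j$, and handles the second case by explicit surgery on minimal-length rankers to prove $x'_{j-1} \leq x_j$ and $x'_{j-1} \geq x_j$ separately. You instead make the predecessor recurrence underlying Algorithm~\ref{alg:xcoord} explicit and run a clean induction on $j$, splitting on whether an $a_j$-position lies in $(i,j)$ and on whether $a_i = a_j$; your preliminaries (i) and (ii) and the base case are material the paper leaves implicit. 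In the one genuinely hard case (where $i$ and $j$ are consecutive $a_i$-positions and $\ell = i$ could be the unique minimizer), both proofs invoke exactly the same inequality: a single $\Y_{a_i}$ step from $j$ gives $y_i \leq y_j + 1$, so $x_j = x_i + 1$ would force $x_i + y_i \leq x_j + y_j \leq k+1$, contradicting the choice of $i$. What your version buys is that the reduction to ``positions directly affected by deleting $i$'' is carried out rigorously rather than asserted, and the argument connects transparently to the correctness proof of Algorithm~\ref{alg:xcoord}; the cost is somewhat more case bookkeeping than the paper's ranker-based phrasing.
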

\begin{proof}
  It suffices to prove the statement $x_{j-1}' = x_j$ for all positions $j$ of
  $u$ reachable by a ranker of the form $r \X_c$ with $r(u) = i$ and $c \in A$.
  By contradiction, suppose that there exists some position $j = r \X_c(u)$
  with $x_{j-1}' \ne x_j$ where $r(u) = i$ and $c \in A$; we choose $c \in A$
  such that $j$ is minimal with this property.
  Let $b = a_i$. We have to distinguish two cases.

  First suppose that there is no $b$-position $f$ with $i < f < j$ in $u$.
  The ranker $r^u_j$ has to visit position $i$ in $u$; otherwise $r^u_j(v) =
  j-1$ and $r^v_{j-1}(u) = j$, a contradiction to $x_{j-1}' \ne x_j$.
  This implies $x_i < x_j$.
  Moreover, position $i$ is reachable from $j$ in $u$ with a single
  $\Y_b$-modality, and hence, we have $x_i + y_i \leq (x_j - 1) + (y_j + 1)
  \leq k+1$. This contradicts the choice of $i$.

  Next, let $f$ be the minimal $b$-position with $i<f<j$. In particular, we
  have $b \neq c$ because $j \ne f$ is the smallest $c$-position of $u$ greater
  than $i$.
  Let $r \X_c$ be an \Xra of length $x_j$ such that $r \X_c (u) = j$.
  If $r(u) = i$, then $r(v) = f - 1$ and hence $r \X_c (v) = j - 1$.
  If $r(u) < i$, then the ranker $r \X_c$ does not visit the position $i$ in
  $u$ and we have $r \X_c(v) = j - 1$.
  Finally, if $r(u) > i$, then (by choice of $c$) the position $r(u) < j$ keeps
  its $x$-coordinate. In other words, there exists an \Xra $r'$ with
  $\abs{r}=\abs{r'}$ and $r'(v) = r(u) - 1$.
  It follows that $r' \X_c(v) = j - 1$.  Therefore, in any case, there exists a
  ranker $s$ of length at most $x_j$ such that $s(v) = j - 1$. This shows
  $x'_{j-1} \leq x_j$, and together with $x'_{j-1} \neq x_j$ we obtain
  $x'_{j-1} < x_j$.

  Consider an \Xra $s \X_c$ of length $x'_{j-1} < x_j$ with $s \X_c(v) = j - 1$.
  We are still in the situation that there exists a $b$-position $f$ in $u$
  with $i<f<j$.
  We cannot have $s(v) < i$ since otherwise $s(u) = s(v)$ and, thus, $s
  \X_c(u) = j$; the latter uses the fact that $b \neq c$.
  Let now $s(v) \ge i$ and write $s = t \X_d$. We have $t(v) < i$ since
  otherwise $t \X_c$ would be a shorter \Xra with $t \X_c(v) = j - 1$.
  We have $d = b$: if $d \neq b$, then $s(v) = s(u) - 1$ and $s \X_c(u) = j$;
  this would show $x'_{j-1} \geq x_j$, thereby contradicting $x'_{j-1} < x_j$.
  It follows that $s(u) = i$ and $s \X_c(u) = j$. As before, this is a
  contradiction. This completes the proof that $x'_{j-1} = x_j$.
\end{proof}

\begin{example}
Let $u = bacbaabada$ be the word from Example~\ref{ex:second} and let $k = 3$.
Suppose that the alphabet $A = \os{a, b, c, d}$ is ordered by $a < b < c < d$.
The attributes of $u$ are as follows:
\begin{equation*}
  \aletter{b}{1}{2}
  \aletter{a}{1}{2}
  \aletter{c}{1}{1}
  \aletter{b}{2}{2}
  \aletter{a}{2}{3}
  \aletter{a}{3}{2}
  \aletter{b}{3}{1}
  \aletter{a}{4}{2}
  \aletter{d}{1}{1}
  \aletter{a}{2}{1}
\end{equation*}
Note that each of the attributes $(x_i, y_i)$ at positions $i \in \os{5, 6, 8}$
satisfies the condition $x_i + y_i > k + 1$. As seen in
Example~\ref{ex:double-del} we must not delete all these positions.
The algorithm only marks positions $6$ and $8$ for deletion and takes these
deletions into account when computing the $y$-coordinates of the remaining
letters:
\begin{equation*}
  \aletter{b}{1}{2}
  \aletter{a}{1}{2}
  \aletter{c}{1}{1}
  \aletter{b}{2}{2}
  \aletter{a}{2}{2}
  \maletter{a}{3}{\phantom{3}}
  \aletter{b}{3}{1}
  \maletter{a}{4}{\phantom{4}}
  \aletter{d}{1}{1}
  \aletter{a}{2}{1}
\end{equation*}
The letters are now sorted as in Example~\ref{ex:commutation} and the resulting
normal form is $bacabbda$.
\qed
\end{example}

The following lemma allows us to improve the estimated time for the sorting
step of the main algorithm by showing that any sequence of letters which needs to be sorted contains every letter at most once.

\begin{lemma}
  Consider a word $uaav$ with $a \in A$ and $\abs{ua} = i$. Then $x_i \neq x_{i+1}$ and $y_i \neq y_{i+1}$.
  \label{lem:no-aa}
\end{lemma}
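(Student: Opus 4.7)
The plan is to exploit the fact that $i$ and $i+1$ are consecutive $a$-positions in $u$ (there are no positions strictly between them). By the left-right symmetry of the definitions of $x_i$ and $y_i$, it suffices to prove $x_i \neq x_{i+1}$; the inequality $y_i \neq y_{i+1}$ then follows by the analogous argument applied to the mirrored word with \Yras in place of \Xras.

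To show $x_i \neq x_{i+1}$, I would prove the stronger statement $x_{i+1} \geq x_i + 1$. Consider any \Xra $r$ with $r(u) = i+1$. Since rankers are nonempty, I can write $r = s \X_a$ for some (possibly nonexistent) prefix $s$. If $s$ is empty then $r = \X_a$ and $r(u)$ is the smallest $a$-position of $u$; but $i < i+1$ is already an $a$-position, so $r(u) \leq i < i+1$, a contradiction. Hence $s$ is a nonempty \Xra. By definition, $s \X_a(u)$ is the smallest $a$-position of $u$ strictly greater than $s(u)$. Since this equals $i+1$ and $i$ is an $a$-position, we must have $s(u) \geq i$; together with $s(u) < i+1$ this forces $s(u) = i$. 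Therefore $\abs{s} \geq x_i$ and $\abs{r} = \abs{s} + 1 \geq x_i + 1$. Taking the minimum over all such $r$ yields $x_{i+1} \geq x_i + 1$, as desired. (Alternatively, one could simply invoke Lemma~\ref{lem:consecutive} once the inequality $x_{i+1} > x_i$ is established, but the direct argument above is already self-contained.)

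There is no real obstacle here: the claim is essentially immediate from the definitions once one observes that between two adjacent $a$-positions nothing can hide another $a$-position. The only subtlety worth mentioning is the short verification that the ranker reaching $i+1$ cannot be of length one, which is why a case distinction on whether $s$ is empty is needed. Everything else is routine.
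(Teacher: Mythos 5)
Your proof is correct and uses essentially the same idea as the paper: factor a minimal \Xra reaching position $i+1$ as $s\ms\X_a$ and observe that $s$ must land exactly on position $i$, forcing $\abs{s} \geq x_i$. The paper phrases this as a proof by contradiction from $x_i = x_{i+1}$ while you prove the slightly stronger $x_{i+1} \geq x_i + 1$ directly, but the content is identical (your parenthetical about invoking Lemma~\ref{lem:consecutive} is superfluous, since the hypothesis $x_{i+1} > x_i$ of that lemma is exactly what you are proving).
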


\begin{proof}
Suppose $x_i = x_{i+1}$. Let $r_i$ and $r_{i+1}$ be \Xras with $r_i(uaav) = i$, $\abs{r_i} = x_i$, $r_{i+1}(uaav) = i+1$, and $\abs{r_{i+1}} = x_{i+1} = x_i$. Let $r_{i+1} = s \X_a$. If $s(uaav) < i$, then $i+1 = r_{i+1}(uaav) = s \X_a (uaav) \leq i$. 
If $s(uaav) = i$, then $\abs{s} = x_i - 1 < x_i = \abs{r_i}$ contradicts the definition of $x_i$. Therefore, we cannot have $x_i = x_{i+1}$. Symmetrically, we cannot have $y_i = y_{i+1}$.
\end{proof}

We are now able to state our main result.

\begin{theorem}
  One can compute the shortlex normal form of a word $w$ of length $n$,
  including all attributes of the normal form, with $\Oh(\abs{A} n)$
  arithmetic operations and with bit complexity $\Oh(\abs{A} n \log n)$.
  Alternatively, the computation can be done in deterministic space
  $\Oh(\abs{A} \log n)$.
  \label{thm:main}
\end{theorem}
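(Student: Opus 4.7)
The plan is to assemble the pieces developed in this section into a single procedure. Given the input word $w$ of length $n$, I first run Algorithm~\ref{alg:xcoord} to assign $x$-coordinates in $\Oh(\abs{A}n)$ operations (Lemma~\ref{lem:xcorrect}), and then run Algorithm~\ref{alg:ycoord} from right to left to produce the $y$-coordinates while marking positions for deletion, again in $\Oh(\abs{A}n)$ operations. The two preceding lemmas certify that the deletions preserve the $\sim_k$-class and that the precomputed $x$-coordinates remain the $x$-coordinates of the reduced word: each deletion performed by Algorithm~\ref{alg:ycoord} removes the currently rightmost position violating $x_i+y_i\le k+1$, so the lemma giving $x_{j-1}'=x_j$ applies to each deletion and, iterated through the entire right-to-left pass, shows that the surviving attributes are correct. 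By Theorem~\ref{thm:del}, the reduced word $v$ then has minimal length in its $\sim_k$-class, with correct attributes available at no extra cost.

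For Phase~2, I scan $v$ once to identify the maximal factors on which the attribute is constant and sums to $k+1$; sorting each such factor alphabetically yields the shortlex normal form by Theorem~\ref{thm:comm}. Lemma~\ref{lem:no-aa} supplies the crucial length bound: if two consecutive positions of such a factor shared a label, their $x$- or $y$-coordinate would differ, contradicting constancy of the attribute, so the labels are pairwise distinct and the factor has length at most $\abs{A}$. A bucket sort handles each factor in $\Oh(\abs{A})$ time, and since the factors are pairwise disjoint the total sorting cost is $\Oh(n)$. Summing, the algorithm uses $\Oh(\abs{A}n)$ arithmetic operations on integers bounded by $n+1$, giving bit complexity $\Oh(\abs{A}n\log n)$.

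For the logarithmic-space variant I would store no per-position data. The $x$-coordinate of any individual position can be recomputed on demand by re-running Algorithm~\ref{alg:xcoord} up to that position using only the $\abs{A}$ counters of $\Oh(\log n)$ bits each; similarly, the deletion flag and $y$-coordinate of a position are recomputed by a right-to-left scan that calls the $x$-subroutine whenever it needs an $x$-value. Output is then produced by a left-to-right sweep that, on entering a sortable factor, determines the factor's right endpoint on the fly (of length at most $\abs{A}$) and emits its letters in alphabetical order by iterating over the alphabet and counting labels within the factor. All intermediate state consists of a bounded number of counters, position pointers, and labels, and fits into $\Oh(\abs{A}\log n)$ bits.

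The main obstacle I anticipate is the correctness of step one: showing that the $x$-coordinates computed by Algorithm~\ref{alg:xcoord} on the original word remain valid attributes throughout the entire sequence of deletions performed in the $y$-pass. This rests on iterating the single-deletion lemma and verifying that Algorithm~\ref{alg:ycoord}, being a purely right-to-left procedure, always targets the rightmost position violating $x+y \le k+1$ at each step, keeping the hypothesis of that lemma satisfied. On the logarithmic-space side, the technical care lies in organising the nested passes so that no auxiliary list of surviving positions is ever materialised and in correctly detecting the boundaries of sortable factors on the fly.
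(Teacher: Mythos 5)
Your proposal is correct and follows essentially the same route as the paper: Algorithms~\ref{alg:xcoord} and~\ref{alg:ycoord} for the attributes and deletions, Lemma~\ref{lem:no-aa} to bound each sortable block by $\abs{A}$ so that sorting is a single scan per block, and the standard recomputation technique plus per-block alphabet scans for the $\Oh(\abs{A}\log n)$ space bound. The only cosmetic difference is that you bound the counters by $n+1$ directly where the paper caps arithmetic at $k+2$ (and dispatches $k>n$ separately so that comparisons with $k$ stay within $\Oh(\log n)$ bits), which amounts to the same thing.
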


\begin{proof}
  The attributes of the normal form can be computed as described in
  Algorithms~\ref{alg:xcoord} and~\ref{alg:ycoord}. The normal form itself is
  obtained by filtering out all positions $i$ where the corresponding attribute
  $(x_i, y_i)$ satisfies $x_i + y_i \le k + 1$ and by sorting blocks of letters
  with the same attributes satisfying $x_i + y_i = k + 1$. By
  Lemma~\ref{lem:no-aa}, the sorting step can be performed by reading each such
  block of letters, storing all letters appearing in the block and only
  outputting all these letters in sorted order once the next block is reached.

  If we assume that the comparison of two letters and the modification of the counters is possible in constant time, then running Algorithm~\ref{alg:ycoord} on the output of Algorithm~\ref{alg:xcoord} takes $\Oh(\abs{A} n)$ steps for input words of length $n$ over alphabet~$A$: for each position of the input word, we need to update $\abs{A}$ counters. Over fixed alphabet, the resulting algorithm runs in linear time\,---\,even if $k$ is part of the input. We could bound all arithmetic operations by $k+2$, \ie, by replacing the usual addition by $n \oplus m = \min(k+2,n+m)$. This way, each counter and all results of arithmetic operations would require only $\Oh(\log k) \subseteq \Oh(\log n)$ bits. Similarly, $\Oh(\log \abs{A}) \subseteq \Oh(\log n)$ bits are sufficient to encode the letters. This leads to a bit complexity of $\Oh(\abs{A} n \log n)$. 
Note that if $k > n$, then the $\sim_k$-class of the input is a singleton and we can immediately output the input without any further computations. If $\abs{A} > n$, then we could replace $A$ by the letters which occur in the input word.

  For the $\Oh(\abs{A} \log n)$ space algorithm, one can again use
  Algorithms~\ref{alg:xcoord} and~\ref{alg:ycoord} to compute the attributes of
  each position.
  To compute the shortlex normal form, we do not store all the attributes but
  use the standard recomputation technique to decide whether a letter gets
  deleted.
  The sorting step can be implemented by repeatedly scanning each block of
  positions with common attributes $(x, y)$ satisfying $x + y = k + 1$. A
  single scan checks, for a fixed letter $a \in A$, whether $a$ occurs in the
  block. This is repeated for every $a \in A$ in ascending order. The
  attributes of the currently investigated block and the current letter $a$ can
  be stored in space $\Oh(\log n)$.
\end{proof}

\section{Computing minimal length rankers}\label{sec:rankercomputation}

Canonical rankers are the crucial ingredient in proving the completeness parts of both the deletion procedure and the commutation principle. One aim of this section is to show that canonical rankers are actually a very natural concept. We illustrate this claim by showing how to derive the computation of the natural rankers from the computation of all length-minimal rankers reaching certain positions.

Algorithm~\ref{alg:xcoord} from Section~\ref{sec:alg} can be modified such that
it additionally computes the sets of rankers $R^u_i$ for all positions $i$ of
an input word $u$; remember that $R^u_i$ is the set of all \Xras with  minimal length
reaching position $i$. Of course, by left-right symmetry, this computation can
also be adapted for computing all \Yras of minimal length.

For every position $i$, in addition to the first component $x_i$ of its
attribute, we also compute a set of positions $P_i$ containing all immediate
predecessors of the minimal length rankers reaching $i$. Note that, if $i$ is a
$c$-position, then the last modality of all rankers reaching $i$ is $\X_c$; in
particular, we do not need to store the modalities. In addition to the counters~$n_a$ for~$a \in A$, the algorithm also uses sets of positions $Q_a$. We say
that a position~$j$ of $u = a_1 \cdots a_n$ is a predecessor of $i$ if there
exists a minimal length \Xra $r \ms \X_c$ reaching~$i$ such that $r(u) = j$.

\begin{algorithm}[h]
\caption{Computing all minimal length \Xras of $a_1 \cdots a_n$}\label{alg:xcoordminlen}
\begin{algorithmic}[1]
  \ForAll {$a \in A$} $n_a \gets 1$; \,$Q_a \gets \emptyset$
  \EndFor
  \For {$i \gets 1,\ldots,n$} 
  \State suppose $a_i = c$
  \State $x_i \gets n_c$; \,$P_i \gets Q_c$
  \State $n_c \gets  n_c + 1$; \,$Q_c \gets \os{i}$
  \ForAll {$a \in A$} 
  \If {$n_c < n_a$} $n_a \gets n_c$; \,$Q_a \gets Q_c$
  \ElsIf {$n_c = n_a$} $Q_a \gets Q_a \cup Q_c$
  \EndIf
  \EndFor
  \EndFor
\end{algorithmic}
\end{algorithm}

To simplify the notation in the following proposition, we assume that $\bigcup_{j \in \emptyset} R^u_j \ms \X_c = \os{\X_c}$.

\begin{proposition}\label{prp:Rui}
  Let $u$ be the input for Algorithm~\ref{alg:xcoordminlen} and let $P_i$ be the set of positions computed for position $i$.
  We have $R^u_i = \bigcup_{j \in P_i} R^u_j \ms \X_c$ if $i$ is a $c$-position.
\end{proposition}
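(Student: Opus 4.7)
The plan is to prove Proposition~\ref{prp:Rui} by establishing the following loop invariant on Algorithm~\ref{alg:xcoordminlen}: writing iteration $0$ for the state just after initialisation, after iteration $i$ of the outer loop, for every letter $a \in A$ for which $u$ has an $a$-position strictly greater than $i$, letting $k$ denote the smallest such position, $n_a$ equals the length of the shortest \Xra reaching $k$ and $Q_a$ equals the set of those $j$ for which some length-minimal \Xra reaching $k$ has the shape $r' \ms \X_a$ with $r'(u) = j$. The stated convention $\bigcup_{j \in \emptyset} R^u_j \ms \X_a = \os{\X_a}$ encodes the length-$1$ ranker $\X_a$ itself via $Q_a = \emptyset$. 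Granting the invariant, the proposition is immediate: upon entering iteration $i$ with $a_i = c$, the smallest $c$-position strictly greater than $i - 1$ is $i$ itself, so $n_c = x_i$ and $Q_c$ is precisely the predecessor set appearing in the claim; the assignment $P_i \gets Q_c$ then yields $R^u_i = \bigcup_{j \in P_i} R^u_j \ms \X_c$ directly from the definition of $R^u_i$.

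I would prove the invariant by induction on $i$. The base case $i = 0$ holds since $\X_a$ has length $1$ and empty prefix, matching $n_a = 1$ and $Q_a = \emptyset$. For the inductive step, fix $i$ with $a_i = c$ and assume the invariant after iteration $i - 1$. The hypothesis directly justifies recording $x_i = n_c$ and $P_i = Q_c$. The assignment $n_c \gets n_c + 1$, $Q_c \gets \os{i}$ reflects that every shortest \Xra reaching the next $c$-position after $i$ must first reach $i$ and then take an $\X_c$-step, giving length $x_i + 1$ with unique immediate predecessor $i$. For every letter $a \neq c$, the next $a$-position strictly greater than $i$ coincides with the next $a$-position strictly greater than $i - 1$ (because $a_i = c \neq a$), and a shortest \Xra reaching it either (a)~avoids $i$ entirely, realising the previous value of $n_a$ with its previous predecessor set, or (b)~visits $i$ and thereby has length at least $x_i + 1$, a bound achieved by any length-$x_i$ ranker to $i$ extended by a single $\X_a$-step. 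The inner loop then takes the minimum of these two options and unions the predecessor sets in the tie case, which is exactly the invariant after iteration $i$.

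The main technical point, and the step that requires the most care, is to justify option~(b) in the tied case: namely, that every length-minimal \Xra $r$ reaching the next $a$-position $k$ which visits $i$ has $i$ itself as its immediate predecessor. This relies on the observation that any \Xra visiting $i$ has a prefix (up to the step that lands on $i$) of length at least $x_i$; in the tied case the total length of $r$ equals the new $n_c = x_i + 1$, so the suffix of $r$ after the visit to $i$ has length at most $1$. Since $k > i$, this suffix cannot be empty and must therefore be the single $\X_a$-step from $i$ to $k$, confirming that $i$ is the immediate predecessor of the final modality. The converse inclusion, that every $j$ stored in the updated $Q_a$ arises from some length-minimal ranker reaching $k$, follows directly from applying the inductive hypothesis to the previous iteration and prepending $\X_a$ to the corresponding rankers. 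With both directions established, the invariant is preserved and the proposition follows.
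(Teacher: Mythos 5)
Your overall strategy (a loop invariant on the pairs $(n_a,Q_a)$, proved by induction over the outer loop) matches the paper's, but the invariant you state is false, and the error is not cosmetic. You claim that after iteration $i$, for the smallest $a$-position $k>i$, the counter $n_a$ equals the length of a shortest \Xra reaching $k$ and $Q_a$ is its predecessor set. The paper's own running example refutes this: for $u=bacbaabada$ and $i=8$, the next $a$-position is $k=10$ with $x_{10}=2$, realized by $\X_d\X_a$ through position $9$, yet after iteration $8$ the algorithm holds $n_a=5$ (see Figure~\ref{fig:xcoord}). The obstruction is that a shortest ranker to $k$ may route through a position strictly between $i$ and $k$, which the algorithm has not yet processed. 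The same example breaks your justification of the assignment $Q_c \gets \os{i}$ (``every shortest \Xra reaching the next $c$-position after $i$ must first reach $i$ \ldots\ with unique immediate predecessor $i$'' fails for $c=a$, $i=8$, $k=10$), and it breaks case~(a) of your inductive step: a ranker reaching $k$ that avoids $i$ need not ``realise the previous value of $n_a$'', because its penultimate position may lie in the open interval between $i$ and $k$.

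The invariant the algorithm actually maintains --- and the one the paper's proof uses, inherited from Lemma~\ref{lem:xcorrect} --- is the ``hypothetical'' one: after iteration $i$, the pair $(n_a,Q_a)$ records the minimal length and the set of penultimate positions over rankers of the form $r\,\X_a$ with $r(u)\le i$ (equivalently, the attribute position $i+1$ would have if it carried label $a$), together with the convention $Q_a=\emptyset$ for the length-one ranker $\X_a$. With that formulation no position can lie strictly between $i$ and the target, your dichotomy ``avoids $i$ / visits $i$'' becomes exhaustive in the required sense, and your careful analysis of the tied case (showing that a minimal ranker of length $x_i+1$ visiting $i$ must have $i$ as its immediate predecessor) goes through essentially unchanged. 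So the repair is local, but as written the induction does not close: the statement you prove at step $i-1$ is not the statement you use at step $i$.
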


\begin{proof}
  We can assume that the semantics of the counters $n_c$ is the one given in the proof of Lemma~\ref{lem:xcorrect}.
  It suffices to show that $P_i$ contains all predecessors of $i$. For this purpose, we prove the following invariant. Before entering the iteration of the loop with position $i$, if~$a_i = c$, then $Q_c$ contains all predecessors of $i$. This invariant is true after the initialization.
  
  Suppose that $a_i = c$ and that the invariant is true before entering the $i$-th iteration. We have to show that it is also true after the $i$-th iteration (that is, before entering iteration~$i+1$). If $a_{i+1} = c$, every $\X_c$-modality reaching $i+1$ cannot start at a position smaller than~$i$; hence, in this case, position $i$ is the unique predecessor of $i+1$.
  
  Note that the inner for loop does not change $Q_c$.
  Let $a \neq c$, suppose $a_{i+1} = a$ and consider the counter $n_a$ before entering the $i$-th iteration.
  If $n_a < x_i + 1$, then position $i$ is not a predecessor of $i+1$ and the predecessors of $i+1$ are the same as in the case where~$i$ would have been an $a$-position.
  If $n_a > x_i + 1$, then position $i$ is the only predecessor of~$i+1$ since all other positions would give rankers of length at least $n_a$ (which is not minimal in this case). 
  If $n_a = x_i + 1$, then $Q_a \cup \os{i}$ contains all predecessors of $i+1$. In all cases, the program variable $Q_a$ is updated to the correct set of predecessors.
\end{proof}

By recursively applying Proposition~\ref{prp:Rui} we get a presentation for $R^u_i$ for each position $i$. The following lemma shows that each set of predecessors is bounded by the alphabet. In particular, this bound also applies to the size of the variables~$Q_a$.

\begin{lemma}
  Let $P_i$ be the set of all predecessors of a position $i$. Then the positions in $P_i$ all have different labels.
\end{lemma}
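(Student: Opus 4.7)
The plan is to derive a contradiction from the assumption that two distinct predecessors $j_1 < j_2$ in $P_i$ share a common label $d$. Writing $c = a_i$, I would first establish the preliminary equality $x_{j_1} = x_{j_2} = x_i - 1$: each $j_t \in P_i$ witnesses a shortest \Xra of the form $r_t \ms \X_c$ reaching $i$, so $x_{j_t} \leq \abs{r_t} = x_i - 1$, and any strictly shorter ranker reaching $j_t$ could be extended by $\X_c$ to produce a ranker to $i$ (since $i$ is the smallest $c$-position after $j_t$) that would violate the minimality of $x_i$. A useful byproduct of applying this reasoning at $t = 1$ is that $i$ is the smallest $c$-position strictly after $j_1$, so no $c$-position lies in the open interval $(j_1, i)$.

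The crux of the argument is then to pick any shortest \Xra reaching $j_2$ and factor it as $s \ms \X_d$. Because the modality $\X_d$ jumps to the smallest $d$-position strictly after $s(u)$ and this target is $j_2$, the existence of the $d$-position $j_1 < j_2$ forces $s(u) \geq j_1$: otherwise $\X_d$ would land on $j_1$ rather than $j_2$. I would then split on whether $s(u) = j_1$ or $s(u) > j_1$.

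The first sub-case is immediate: $\abs{s} = x_{j_2} - 1 = x_{j_1} - 1$ contradicts the fact that $x_{j_1}$ is the length of a shortest \Xra reaching $j_1$. The second sub-case is where the geometry of $c$ comes in: since $j_2 < i$ and $s(u) < j_2$, the position $s(u)$ lies inside $(j_1, i)$, which we already know contains no $c$-position. Therefore the smallest $c$-position strictly after $s(u)$ is $i$, and $s \ms \X_c$ is a ranker reaching $i$ of length $\abs{s} + 1 = x_{j_2} = x_i - 1$, contradicting the minimality of $x_i$.

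The main obstacle I anticipate is spotting the right combinatorial invariant to exploit: namely, that the very condition ``$j_1 \in P_i$'' forbids any $c$-position in $(j_1, i)$ and thus allows the $\X_c$-extension in the second sub-case to land on $i$. Once this is isolated, both sub-cases collapse to one-line length comparisons and no separate analysis of the $d$-positions between $j_1$ and $j_2$ is needed.
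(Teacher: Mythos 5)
Your proposal is correct and follows essentially the same route as the paper: both arguments note that two same-labelled predecessors would have equal $x$-coordinate $x_i-1$, locate the penultimate position of a shortest ranker to the larger predecessor strictly between the two (or at the smaller one, which you handle explicitly and the paper leaves implicit), and then extend from there by a single $\X_c$-modality to obtain a shortcut to $i$ contradicting the minimality of $x_i$. Your write-up is, if anything, slightly more careful about why that final $\X_c$ lands on $i$.
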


\begin{proof}
If there are $a$-positions $k<\ell$ which are possible predecessors of a $c$-position $i$, then this leads to a shortcut to $i$ which is a contradiction: The positions $k$ and $\ell$ have the same $x$-coordinate, say $m$. In particular, the $x$-coordinate of $i$ is $m+1$. In order to reach $\ell$ in $m$ steps, there has to exist a position $j$ with $k<j<\ell$ with an $x$-coordinate smaller than $m$. Going to $j$ and then (with only one $\X_c$-modality) to $i$ yields a ranker of length at most $m$ reaching position $i$, in contradiction to its $x$-coordinate $m+1$.
\end{proof}

If instead of the sets $Q_a$ in Algorithm~\ref{alg:xcoordminlen}, we only keep
their minimal positions~$q_a$, then this exactly computes the canonical \Xras
$r^u_i$, see Algorithm~\ref{alg:xcoordcanonical}. While the definition of the
canonical rankers minimizes from right to left, the algorithm 
processes the positions from left to right. The latter direction is coherent
with the statement in Lemma~\ref{lem:consecutive}.

\begin{algorithm}[h]
\caption{Computing the $x$-coordinates and canonical rankers of $a_1 \cdots a_n$}\label{alg:xcoordcanonical}
\begin{algorithmic}[1]
  \ForAll {$a \in A$} $n_a \gets 1$; \,$q_a \gets 0$
  \EndFor
  \For {$i \gets 1,\ldots,n$} 
  \State suppose $a_i = c$
  \State $x_i \gets n_c$; \,$p_i \gets q_c$
  \State $n_c \gets  n_c + 1$; \,$q_c \gets i$
  \ForAll {$a \in A$} 
  \If {$n_c < n_a$} $n_a \gets n_c$; \,$q_a \gets q_c$
  \EndIf
  \EndFor
  \EndFor
\end{algorithmic}
\end{algorithm}

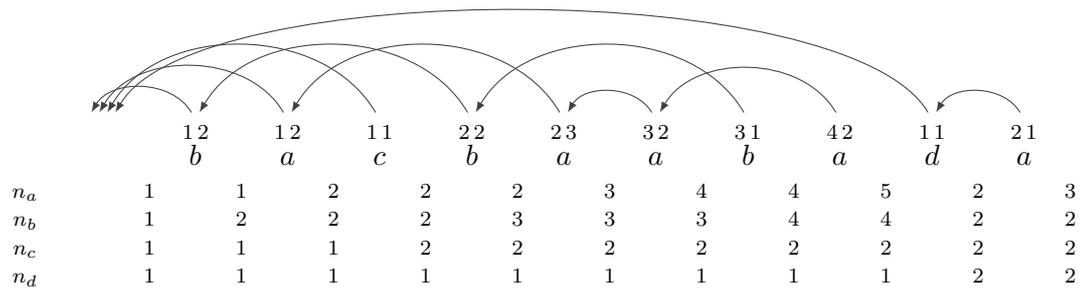
\begin{figure}[ht]
  \centering
  \begin{tikzpicture}
    \matrix (A) [matrix of math nodes,column sep=0mm,row sep=0mm]{%
      \hspace{1cm}
      & \phantom{\aletter{b}{1}{1}} &
      & \aletter{b}{1}{2} &
      & \aletter{a}{1}{2} &
      & \aletter{c}{1}{1} &
      & \aletter{b}{2}{2} &
      & \aletter{a}{2}{3} &
      & \aletter{a}{3}{2} &
      & \aletter{b}{3}{1} &
      & \aletter{a}{4}{2} &
      & \aletter{d}{1}{1} &
      & \aletter{a}{2}{1} \\
      {\scriptstyle n_a}
      & &
      {\scriptstyle 1} & &
      {\scriptstyle 1} & &
      {\scriptstyle 2} & &
      {\scriptstyle 2} & &
      {\scriptstyle 2} & &
      {\scriptstyle 3} & &
      {\scriptstyle 4} & &
      {\scriptstyle 4} & &
      {\scriptstyle 5} & &
      {\scriptstyle 2} & &
      {\scriptstyle 3} \\[-1mm]
      {\scriptstyle n_b}
      & &
      {\scriptstyle 1} & &
      {\scriptstyle 2} & &
      {\scriptstyle 2} & &
      {\scriptstyle 2} & &
      {\scriptstyle 3} & &
      {\scriptstyle 3} & &
      {\scriptstyle 3} & &
      {\scriptstyle 4} & &
      {\scriptstyle 4} & &
      {\scriptstyle 2} & &
      {\scriptstyle 2} \\[-1mm]
      {\scriptstyle n_c}
      & &
      {\scriptstyle 1} & &
      {\scriptstyle 1} & &
      {\scriptstyle 1} & &
      {\scriptstyle 2} & &
      {\scriptstyle 2} & &
      {\scriptstyle 2} & &
      {\scriptstyle 2} & &
      {\scriptstyle 2} & &
      {\scriptstyle 2} & &
      {\scriptstyle 2} & &
      {\scriptstyle 2} \\[-1mm]
      {\scriptstyle n_d}
      & &
      {\scriptstyle 1} & &
      {\scriptstyle 1} & &
      {\scriptstyle 1} & &
      {\scriptstyle 1} & &
      {\scriptstyle 1} & &
      {\scriptstyle 1} & &
      {\scriptstyle 1} & &
      {\scriptstyle 1} & &
      {\scriptstyle 1} & &
      {\scriptstyle 2} & &
      {\scriptstyle 2} \\
    };

    \draw[color=darkgray]
      ([xshift=-1.5pt]A-1-4.north) edge[-latex,out=120,looseness=1,in=60] ([xshift=-4.5pt]A-1-2.north)
      ([xshift=-1.5pt]A-1-6.north) edge[-latex,out= 120,looseness=1,in=60] ([xshift=-1.5pt]A-1-2.north)
      ([xshift=-1.5pt]A-1-8.north) edge[-latex,out= 120,looseness=1,in=60] ([xshift=1.5pt]A-1-2.north)
      ([xshift=-1.5pt]A-1-10.north) edge[-latex,out=120,looseness=1,in=60] ([xshift=1.5pt]A-1-4.north)
      ([xshift=-1.5pt]A-1-12.north) edge[-latex,out=120,looseness=1,in=60] ([xshift=1.5pt]A-1-6.north)
      ([xshift=-1.5pt]A-1-14.north) edge[-latex,out=120,looseness=1,in=60] ([xshift=1.5pt]A-1-12.north)
      ([xshift=-1.5pt]A-1-16.north) edge[-latex,out=120,looseness=1,in=60] ([xshift=1.5pt]A-1-10.north)
      ([xshift=-1.5pt]A-1-18.north) edge[-latex,out=120,looseness=1,in=60] ([xshift=1.5pt]A-1-14.north)
      ([xshift=-1.5pt]A-1-20.north) edge[-latex,out=120,looseness=.5,in=60] ([xshift=4.5pt]A-1-2.north)
      ([xshift=-1.5pt]A-1-22.north) edge[-latex,out=120,looseness=1,in=60] ([xshift=1.5pt]A-1-20.north)
      ;
  \end{tikzpicture}
  \caption{Computation of the $x$-coordinates of $bacbaabada$}
  \label{fig:xcoord}
\end{figure}

The computation of Algorithm~\ref{alg:xcoordcanonical} on input $bacbaabada$, including the values of
each counter $n_a$ for $a \in A$ and arrows representing the pointers $p_i$, is depicted in Figure~\ref{fig:xcoord}.

\section{Summary and Outlook}

We considered Simon's congruence $\sim_k$ for piecewise testable languages.
The main contribution of this paper is an $\Oh(\abs{A}n)$ algorithm for computing the shortlex normal form of a word of length $n$ within its $\sim_k$-class; surprisingly, this bound also holds if $k$ is part of the input. The algorithm can be adapted to work in deterministic logarithmic space over fixed alphabet. As a consequence, on input $u,v,k$, one can test in time $\Oh(\abs{A}\ms\abs{uv})$ whether $u \sim_k v$ holds. The main tool are the minimal lengths of \Xras and \Yras reaching any position of a word. The key ingredient in the proofs are the so-called canonical rankers.
In Section~\ref{sec:rankercomputation}, we give some additional insight into this concept by providing an algorithm for their computation.

It would be
interesting to see whether the space complexity for an arbitrary alphabet can
be further improved from $\Oh(\abs{A} \log n)$ to nondeterministic log-space
or even deterministic log-space if the alphabet $A$ is part of the input.
In addition, we still lack corresponding lower bounds for the computation of shortlex normal forms and for the test of whether $u \sim_k v$ holds.

\newcommand{\Ju}{Ju}\newcommand{\Ph}{Ph}\newcommand{\Th}{Th}\newcommand{\Ch}{Ch}\newcommand{\Yu}{Yu}\newcommand{\Zh}{Zh}\newcommand{\St}{St}\newcommand{\curlybraces}[1]{\{#1\}}

\end{document}